\documentclass[12pt]{article}
\usepackage{amsmath}
\usepackage{mathtools}
\usepackage{amsthm}
\usepackage{amssymb}
\usepackage{authblk}
\usepackage{bm}
\usepackage{hyperref}
\usepackage{booktabs}
\usepackage{color}
\usepackage{comment}
\usepackage{graphicx,psfrag,epsf}
\usepackage{enumerate}
\usepackage{epstopdf}
\usepackage[authoryear,round]{natbib}
\usepackage{tabularx}
\usepackage{ulem}
\usepackage{url} % not crucial - just used below for the URL
\usepackage{xcolor}

\newcommand{\Sigmar}{\bm{\Sigma}_{\text{r}}}
\newcommand{\Sigmac}{\bm{\Sigma}_{\text{c}}}
\newcommand{\blind}{0}
\newcount\Comments  % 0 suppresses notes to selves in text
\definecolor{darkgreen}{rgb}{0,0.5,0}
\definecolor{purple}{rgb}{1,0,1}
\newcommand{\kibitz}[2]{\ifnum\Comments=1\textcolor{#1}{#2}\fi}

\newcommand{\tv}[1]{\textsc{vol}_{#1}}
\newcommand{\bp}[1]{\textsc{rbp}_{#1}}
\newcommand{\ret}[1]{\textsc{ret}_{#1}}
\newcommand{\e}[3]{#1_{#3}^{\textsc{#2}}}
\newcommand{\ee}[2]{\tilde{e}_{#2}^{\textsc{#1}}}

\newcommand{\note}[1]{\begin{flushleft}\footnotesize\textbf{Note:} #1\end{flushleft}}

\newtheorem{proposition}{Proposition}

\author[1]{Andrea Bucci}
\author[2]{Giulio Palomba}
\author[3]{Eduardo Rossi}

\date{}

\affil[1]{\footnotesize Department of Economics and Law, University of Macerata, Italy}
\affil[2]{\footnotesize Department of Economics and Social Sciences, Marche Polytechnic University, Italy}
\affil[3]{\footnotesize Department of Economics and Management, University of Pavia, Italy}

\begin{document}

\def\spacingset#1{\renewcommand{\baselinestretch}%
{#1}\small\normalsize} \spacingset{1}

%%%%%%%%%%%%%%%%%%%%%%%%%%%%%%%%%%%%%%%%%%%%%%%%%%%%%%%%%%%%%%%%%%%%%%%%%%%%%%

\if0\blind
{
  \title{\bf A Structural Matrix Autoregressive Model for the Joint Dynamics of Volume, Volatility, and Returns
  }
  \maketitle
} \fi

\if1\blind
{
  \bigskip
  \bigskip
  \bigskip
  \begin{center}
    {\LARGE\bf Title}
\end{center}
  \medskip
} \fi

\bigskip
\begin{abstract}
  This paper proposes a Structural Matrix Autoregressive (SMAR) model for the joint analysis of asset returns, realized volatility, and trading
  volume in a large-dimensional setting. This framework simultaneously captures dynamic spillovers across financial variables and
  cross-sectional dependence across assets while preserving a parsimonious parameterization relative to conventional vector autoregressive
  models. The model is estimated on daily data for the constituents of the Dow Jones Industrial Average over the period 2021--2025 and is
  structurally identified through restrictions consistent with the Mixture of Distributions Hypothesis and efficient market theory. The empirical
  findings indicate that volatility is the primary driver of trading activity, suggesting that informational shocks are predominantly
  incorporated into markets through price variability. Forecast error variance decompositions further reveal that, although internal shocks
  dominate short-term volume dynamics, cross-asset spillovers account for more than 50\% of trading volume variation at longer horizons. Finally,
  an event-study analysis around FOMC announcements supports the proposed decomposition by identifying significant increases in the informative
  component of trading activity on announcement days followed by rapid mean reversion.
\end{abstract}

\noindent%
\textit{Keywords:}  Matrix-valued time series;  Volatility; Trading volume decomposition; Structural model; MDH

\newpage
\spacingset{1.8}

\section{Introduction}

Recent global history has shown how extreme events cause financial markets to jitter, which experience periods of high volatility and extremely
variable trading volumes \citep[see][]{Longin2001, Diebold2009, Bollerslev2018}. In this context, it is crucial for market participants,
institutions, and policymakers to understand how volatility and liquidity spillovers interact \citep{Acharya2005, Brunnermeier2009}. Indeed,
while from a microeconomic perspective, understanding the dynamics and correlations of asset prices leads investors and asset managers to
portfolio optimization, asset management, and risk hedging, from a macroeconomic perspective, it should lead policymakers towards strategies
aimed at economic and financial stability.

Although traditional financial theories have primarily focused on the dynamics between asset prices and their volatility, the analysis of trading
volumes has received relatively less attention, despite being an indicator of market sentiment \citep{Darolles2015}. For example, price increases
during periods of low volume may only produce transitory effects, while similar price movements during periods of high volume may have
long-lasting implications. As further discussed in \cite{Bollerslev2018}, trading volume and return volatility tend to co-vary
\citep{Chiang2010}, especially after news shocks. This result underlies the common practice of using trading volume as an exogenous regressor in
time varying volatility models.

Furthermore, the mixed distribution hypothesis (MDH), initially formulated by \cite{Tauchen1983}, posits that the observed joint behavior between
asset volatility and trading volumes is due to latent information flows that simultaneously drive price fluctuations and trading activity. In
parallel, \cite{Darolles2015} emphasize the methodological challenges in jointly modeling price dynamics and trading volumes, particularly when
one wants to separate the dynamics related to information shocks from those attributable to liquidity constraints. These insights point to the
necessity of extending existing approaches to better capture the complex dynamics underlying trading activity. In this direction, \cite{He2014}
develop a multi-asset mixture distribution hypothesis model to investigate commonality in stock returns and trading volumes based on factor
structure for returns and trading volumes.

Although existing methodologies reliably assess the liquidity-volatility relationship for individual stocks \citep{Chordia2000, Pastor2003}, they
overlook different components of liquidity dynamics. To fill this gap in the literature, we propose here to model the cross-asset dynamics of
volatility, volumes and prices through the Structural Matrix-valued AutoRegressive (SMAR) model. This method extends the structural vector
autoregressive (SVAR) model of \cite{Sims1980} to multivariate settings in which data form matrices. As in the MAR model introduced by
\cite{Chen2021}, we collect $m$ assets in the rows and $n$ financial metrics in the columns, thus forming a data matrix for each time $t=1,2,\ldots,T$. In this way, SMAR jointly takes into account the spillover effects across assets and the dynamic interconnectedness of some
financial variables that share a common latent structure in the MDH. The SMAR offers a parsimonious alternative to the traditional architectures,
typically the Structural VAR (SVAR) or Panel VAR \citep[PVAR, see][for a comprehensive survey]{Canova2013} models, in which matrix data are
stacked into a vector. In fact, excluding any deterministic component, the vectorization in these models foresees the estimation of $m^2n^2$
parameters per lag, while the number of parameters in the SMAR reduces to $m^2+n^2$. Hence, the vectorization results in a loss of efficiency since the
relationships between rows and columns can no longer be maintained, and makes the estimation computationally cumbersome. In addition, while
identifying structural innovations in panel SVAR models typically requires computationally intensive data-pooling or the assumption of
statistical independence on a massive $mn \times mn$ covariance matrix $\bm{\Sigma}$ \citep[e.g.,][]{Herwartz2024}, the SMAR framework leverages a
separable covariance structure $\bm{\Sigma} = \Sigmar \otimes \Sigmac$, where $\Sigmac$ and $\Sigmar$ are the column- and row-wise error
covariance matrices. This allows one to perform simultaneous structural identification exclusively within the lower-dimensional $n \times n$
variable space $\Sigmac$, leaving cross-sectional dependencies to be absorbed by $\Sigmar$ without requiring restrictive entity-level
orderings. Furthermore, we can exploit the algebraic properties of our framework to solve analytically the identification issues and compute both
single-asset-single-variable impulse response functions (IRFs) and averaged IRFs across either assets or financial variables. In addition, since
our approach relies on the explicit structural impulse response functions, they represent a valid alternative to those defined by \cite{Chen2021}
in the MAR context.

We apply the SMAR model to daily data for $m=30$ assets composing the Dow Jones index, covering the period from January 2021 to February
2025. The empirical analysis delivers several findings. First, volatility is a key determinant of trading activity. Second, a forecast error
variance decomposition suggests that trading volume is primarily driven by its own shocks over the short run, while cross-asset spillovers
account for more than half of the total volume variance over a 20-day time horizon. Moreover, an event study around the announcements of the
Federal Open Market Committee (FOMC) shows a statistically significant spike in the information share of volume on announcement days
\citep{Lucca2015}, followed by a rapid mean reversion consistent with efficient price discovery \citep{Andersen2003}.

The remainder of the paper is organized as follows. In Section \ref{sec:Model}, we introduce the SMAR model, discuss identification issues, and
define the impulse response functions. Section \ref{sec:SMARfinance} proposes a specification of the column-wise structural matrix based on some
results from the finance literature and provides some financial considerations. Section \ref{sec:Empirical_application} is devoted to the
empirical analysis, and finally Section \ref{sec:Conclusions} concludes.

\section{The Structural MAR model}
\label{sec:Model}

Let $\mathbf{Y}_t$ be a time series matrix $m \times n$, with $t=1,2,\ldots,T$, containing data for $m$ assets and $n$ financial measures. Since
we are interested in a structural model that incorporates simultaneous relationships between all row and column variables of $\mathbf{Y}_t$,
these simultaneous interactions are represented in a linear setup by the $m \times m$ matrix $\mathbf{R}_0$ and the $(n \times n)$ matrix
$\mathbf{C}_0$, both of full rank. Therefore, the structural MAR($p$) model (hereafter SMAR) is defined as
\begin{equation}
  \label{eq:MARR0C0}
  \mathbf{R}_0\mathbf{Y}_t\mathbf{C}_0^\top = \mathbf{R}_1 \mathbf{Y}_{t-1} \mathbf{C}_1^\top + \dots + \mathbf{R}_p \mathbf{Y}_{t-p}\mathbf{C}_p^\top + \mathbf{U}_t,
\end{equation}
where $p$ is the lag order, $\mathbf{R}_1, \ldots, \mathbf{R}_p$ are $m \times m$ row-wise effect parameter matrices whose Frobenius norm must be
equal to 1 \citep[see][]{Chen2021}, $\mathbf{C}_1, \ldots, \mathbf{C}_p$ are $n \times n$ matrices containing the parameters for column-wise
relationships. The deterministic components have been suppressed for notational convenience. \(\mathbf{U}_t\) denotes an $m \times n$ matrix of
serially uncorrelated structural shocks with zero mean and unit variances \citep{kilian2017}. It follows that the $mn \times mn$ covariance
matrix of $\text{vec}(\mathbf{U}_t)$ (where the vec operator transforms a matrix into a vector by stacking the columns of the matrix one
underneath the other) is the identity matrix
\begin{equation}
  \label{eq:SigmaD}
  \bm{\Omega} = \text{Var}\left(\text{vec}(\mathbf{U}_t)\right) = \mathbf{I}_{mn} = \mathbf{I}_n \otimes \mathbf{I}_m,
\end{equation}
where $\mathbf{I}_{m}$ and $\mathbf{I}_{n}$ represent the row-wise and the column-wise covariance matrices, respectively. Therefore, there are
as many structural shocks as variables in the model, and these shocks are mutually uncorrelated with unit variance \citep{kilian2017}.

The reduced-form representation of the structural MAR($p$) model, \textit{i.e.}, where $\mathbf{Y}_t$ is a function of the lags of $\mathbf{Y}_t$
only, is obtained by post-multiplying both sides of Equation \eqref{eq:MARR0C0} by \(\mathbf{C}_0^{-\top}\) and pre-multiplying by
\(\mathbf{R}_0^{-1}\). This leads to
\[
  \mathbf{R}_0^{-1}\mathbf{R}_0\mathbf{Y}_t\mathbf{C}_0^\top\mathbf{C}_0^{-\top} = 
  \mathbf{R}_0^{-1}\mathbf{R}_1 \mathbf{Y}_{t-1} \mathbf{C}_1^\top \mathbf{C}_0^{-\top} + \dots +
  \mathbf{R}_0^{-1}\mathbf{R}_p \mathbf{Y}_{t-p}\mathbf{C}_p^\top \mathbf{C}_0^{-\top} + \mathbf{R}_0^{-1}\mathbf{U}_t\mathbf{C}_0^{-\top},
\] 
that corresponds to the reduced-form MAR($p$) by \cite{Chen2021} given by
\begin{equation}
    \label{eq:reduced}
        \mathbf{Y}_t = \mathbf{A}_1 \mathbf{Y}_{t-1} \mathbf{B}_1^\top + \dots + \mathbf{A}_p \mathbf{Y}_{t-p} \mathbf{B}_p^\top + \mathbf{E}_t,
\end{equation}
where $\mathbf{A}_i=\mathbf{R}_0^{-1}\mathbf{R}_i$ are $m\times m$ matrices, and $\mathbf{B}_i=\mathbf{C}_0^{-1}\mathbf{C}_i$ are $n\times n$
matrices for $\forall\,i = 1,\ldots,p$. Consequently, the reduced-form innovation matrix at time $t$ is given by
\begin{equation}\label{eq:Et}
  \mathbf{E}_t = \mathbf{R}_0^{-1}\mathbf{U}_t\mathbf{C}_0^{-\top}.
\end{equation}
To compute the covariance of $\mathbf{E}_t$, we can use the vectorized form of Equation \eqref{eq:MARR0C0}, given by
\begin{equation}\label{eq:vec1}
    \begin{split}
      (\mathbf{C}_0 \otimes \mathbf{R}_{0})\text{vec}(\mathbf{Y}_t) & = (\mathbf{C}_1 \otimes \mathbf{R}_1)\text{vec}(\mathbf{Y}_{t-1}) + \dots +
      (\mathbf{C}_p \otimes \mathbf{R}_p)\text{vec}(\mathbf{Y}_{t-p}) + \text{vec}(\mathbf{U}_t).
    \end{split}
\end{equation}
Since $\mathbf{R}_0$ and $\mathbf{C}_0$ are nonsingular, pre-multiplying by $\left(\mathbf{C}_0\otimes \mathbf{R}_0\right)^{-1}$ implies that
\begin{equation}\label{eq:vec2}
    \begin{split}
      \text{vec}(\mathbf{Y}_t) & = (\mathbf{C}_0 \otimes \mathbf{R}_{0})^{-1}(\mathbf{C}_1 \otimes \mathbf{R}_1)\text{vec}(\mathbf{Y}_{t-1}) +\\
      &+ \dots + (\mathbf{C}_0 \otimes \mathbf{R}_{0})^{-1}(\mathbf{C}_p \otimes \mathbf{R}_p)\text{vec}(\mathbf{Y}_{t-p}) + (\mathbf{C}_0
      \otimes \mathbf{R}_{0})^{-1}\text{vec}(\mathbf{U}_t).
    \end{split}
\end{equation}
Further, we can obtain
\[
  \text{vec}(\mathbf{E}_t) = \text{vec}\left(\mathbf{R}_0^{-1}\mathbf{U}_t \mathbf{C}_0^{-\top}\right) =
  \left[\mathbf{C}_0^{-1} \otimes \mathbf{R}_0^{-1}\right]\text{vec}(\mathbf{U}_t).
\]
Considering the covariance matrix of structural shocks defined by the equation \eqref{eq:SigmaD}, the covariance matrix of
\(\text{vec}(\mathbf{E}_t)\) turns out to be
\begin{equation}
  \label{eq:sigmaE}
  \bm{\Sigma}_E = \left[\mathbf{C}_0^{-1} \otimes \mathbf{R}_0^{-1}\right] \bm{\Omega} \left[\mathbf{C}_0^{-1} \otimes \mathbf{R}_0^{-1}\right]^\top=
  \left(\mathbf{C}_0^\top\mathbf{C}_0\right)^{-1}\otimes\left(\mathbf{R}_0^\top\mathbf{R}_0\right)^{-1}.
\end{equation}
Assuming that the covariance of $\mathbf{E}_t$ in a MAR model \citep{Chen2021} is an $mn \times mn$ matrix expressed as
\[
  \bm{\Sigma}_E = \Sigmac \otimes \Sigmar,
\]
it follows that the term $\Sigmac = \left(\mathbf{C}_0^\top\mathbf{C}_0\right)^{-1}$ represents the column-wise covariance of $\mathbf{E}_t$,
while $\Sigmar = \left(\mathbf{R}_0^\top\mathbf{R}_0\right)^{-1}$ denotes the row-wise covariance of $\mathbf{E}_t$. In this context, the
column-wise and row-wise covariance matrices of the reduced-form error, $\Sigmac$ and $\Sigmar$, can be considered known \citep{kilian2017}.

\subsection{Identification}
\label{sec:identify}

For the identification of $\mathbf{C}_0^{-\top}$, it can be noticed that the number of free parameters in $\Sigmac$ is $n(n +1)/2$ which is the
upper limit for the number of parameters in $\mathbf{C}_0^{-\top}$ that can be uniquely identified \citep{kilian2017}. This means that we have to
impose at least $n(n-1)/2$ normalizing restrictions.

Since $\Sigmac^{-1} = \mathbf{C}_0^\top \mathbf{C}_0$, we can assume that $\mathbf{C}_0^\top$ is a lower triangular matrix with positive diagonal
elements, as in the following section, \textit{i.e.}, the Cholesky factor of $\Sigmac^{-1}$.

Analogously, given that $\Sigmar$ has $m(m+1)/2$ unique elements, we must impose $m(m-1)/2$ conditions to achieve identification of the
parameters of $\mathbf{R}_0$. This means that, in our model, identifying contemporaneous effects across assets can be challenging, as the number
of required constraints increases quadratically with the number of assets ($m$). Defining such a large number of constraints from economic theory
is often very difficult, as the simultaneous interactions between different assets could be extremely complex.

While we recognize that constraints can be imposed in the Equation \eqref{eq:MARR0C0} by dividing assets into clusters using appropriate
techniques (\textit{e.g.}, cluster analysis or principal components analysis), or by adopting some criterion to separate the elements
(\textit{e.g.}, sector or return on assets), we assume that contemporaneous changes are due exclusively to structural shocks to financial
measures, rather than being specific to individual assets; in practice, we impose the structure $\mathbf{R}_0 = \mathbf{I}_m$.

\subsection{SMAR model estimation}
\label{sec:Estimation}

In this section, we present the log-likelihood of the SMAR(1) model. We start by defining the set of parameters contained in the vector
$\bm{\psi} = \left[\text{vec}(\mathbf{C}_0)^\top, \text{vec}(\mathbf{C}_1)^\top, \text{vec}(\mathbf{R}_0)^\top,
  \text{vec}(\mathbf{R}_1)^\top\right]^\top$. The Gaussian log-likelihood \citep{Wu2026} is
\begin{equation}\label{eq:LL}
\begin{split}
  \ell(\bm\psi; \mathbf{Y}_t, \mathbf{Y}_{t-1}) &= -\frac{m}{2}(T-1)\log |\Sigmac| - \frac{n}{2}(T-1)\log |\Sigmar| -
  \frac{1}{2}\sum_{t=2}^T\text{tr}\left(\Sigmar^{-1}\mathbf{E}_t \Sigmac^{-1}\mathbf{E}_t^{\top}\right),
\end{split}
\end{equation}
where
\begin{equation}
  \label{eq:ourSMAR}
  \mathbf{E}_t=\mathbf{Y}_t-(\mathbf{R}_0^{-1}\mathbf{R}_1)\mathbf{Y}_{t-1}(\mathbf{C}_0^{-1}\mathbf{C}_1)^{\top}.
\end{equation}
Given that $\mathbf{U}_t = \mathbf{R}_0\mathbf{Y}_t\mathbf{C}_0^\top - \mathbf{R}_1 \mathbf{Y}_{t-1} \mathbf{C}_1^\top$, the first-order
derivatives with respect to $\mathbf{R}_0, \mathbf{R}_1, \mathbf{C}_0$ and $\mathbf{C}_1$ are given by
\begin{equation}
\begin{split}
  \frac{\partial\ell(\cdot)}{\partial \mathbf{R}_0}& =n(T-1)\mathbf{R}_0^{-\top}-\sum_{t=2}^T \mathbf{U}_t \mathbf{C}_0 \mathbf{Y}_t^{\top}\\
  \frac{\partial\ell(\cdot)}{\partial \mathbf{R}_1}& =\sum_{t=2}^T \mathbf{U}_t \mathbf{C}_1 \mathbf{Y}_{t-1}^{\top}\\
  \frac{\partial\ell(\cdot)}{\partial \mathbf{C}_0}& =m(T-1)\mathbf{C}_0^{-\top}-\sum_{t=2}^T \mathbf{Y}_t^{\top} \mathbf{R}_0^\top \mathbf{U}_t\\
  \frac{\partial\ell(\cdot)}{\partial \mathbf{C}_1}&=\sum_{t=2}^T \mathbf{Y}_{t-1}^{\top} \mathbf{R}_1^\top \mathbf{U}_t.
\end{split}
\end{equation}
Since the first-order conditions for $\mathbf{C}_0$ and $\mathbf{R}_0$ do not have an analytic solution, the parameters can be estimated by
numerical optimization. However, this is hampered by the dimension of $\mathbf{C}_0$ and $\mathbf{R}_0$ which makes this approach unfeasible,
especially for high-dimensional problems. Alternatively, one can rely on the iterative estimation \citep{Chen2021} of
$\mathbf{A}_1, \mathbf{B}_1, \Sigmac, \Sigmar$, often referred to as the ``flip-flop'' algorithm \citep[see][]{Dutilleul1999}. In this case, only
one matrix is updated, while keeping fixed the others using the following iterations
\begin{equation}
\begin{split}
  \mathbf{A}_1 &\leftarrow \left(\sum_{t=2}^{T}\mathbf{Y}_t\Sigmac^{-1}\mathbf{B}_1\mathbf{Y}_{t-1}^\top\right)
  \left(\sum_{t=2}^{T}\mathbf{Y}_{t-1}\mathbf{B}_1^\top\Sigmac^{-1}\mathbf{B}_1\mathbf{Y}_{t-1}^\top\right)^{-1}\\
  \mathbf{B}_1 &\leftarrow \left(\sum_{t=2}^{T}\mathbf{Y}_t^\top \Sigmar^{-1}\mathbf{A}_1\mathbf{Y}_{t-1}\right)
  \left(\sum_{t=2}^{T}\mathbf{Y}_{t-1}^\top \mathbf{A}_1^\top \Sigmar^{-1}\mathbf{A}_1\mathbf{Y}_{t-1}\right)^{-1}\\
  \Sigmac &\leftarrow \frac{\sum_{t=2}^{T}\mathbf{E}_t^\top \Sigmar^{-1}\mathbf{E}_t}{m(T-1)}\\
  \Sigmar &\leftarrow \frac{\sum_{t=2}^{T}\mathbf{E}_t \Sigmac^{-1}\mathbf{E}_t^\top}{n(T-1)},
\end{split}
\end{equation}
where $\mathbf{A}_1$ and $\Sigmar$ have a unit Frobenius norm \citep[see][for details]{Chen2021,Bucci2026}. Since we impose
$\mathbf{R}_0 = \mathbf{I}_m$ in Section \ref{sec:identify}, here we focus only on $\mathbf{C}_0$. We can recover $\hat{\mathbf{C}}_0^\top$ as
the Cholesky factor of the estimated inverse of $\Sigmac$, \textit{i.e.},
\begin{equation}
  \label{eq:Cholesky}
  \hat{\bm{\Sigma}}_{\text{c}}^{-1} = \hat{\mathbf{C}}_0^\top\hat{\mathbf{C}}_0.
\end{equation}
For the purpose of
interpretability, we normalize the effect of the structural shocks on the errors of the reduced-form model as follows:
\begin{equation}\label{eq:Q}
    \hat{\mathbf{Q}} = \hat{\mathbf{C}}_0 \langle\hat{\mathbf{C}}_0\rangle^{-1}
\end{equation}
where $\langle\hat{\mathbf{C}}_0\rangle = \text{diag}(\hat{\mathbf{C}}_0)$.

Using the following total score for
$\bm{\beta} = \left[\text{vec}(\mathbf{A}_1)^\top, \text{vec}(\mathbf{B}_1)^\top, \text{vech}(\Sigmar)^\top, \text{vech}(\Sigmac)^\top\right]^\top$,
$s_T(\bm{\beta}) = \sum_{t=2}^{T}s_t(\bm{\beta})$ (where $\operatorname{vech}$ is the operator that stacks the lower triangular elements of a
matrix in a vector), derived in Appendix \ref{sec:appendixA}
\begin{equation}
  \label{eq:score}
  \mathbf{s}_T(\bm{\beta}) = \begin{bmatrix}
    \frac{\partial \ell(\cdot)}{\partial \text{vec}(\mathbf{A}_1)}\\
    \frac{\partial \ell(\cdot)}{\partial \text{vec}(\mathbf{B}_1)}\\
    \frac{\partial \ell(\cdot)}{\partial \text{vech}(\Sigmac)}\\
    \frac{\partial \ell(\cdot)}{\partial \text{vech}(\Sigmar)}\\
  \end{bmatrix} = \begin{bmatrix}
    \sum_{t=2}^{T}\left[ (\mathbf{Y}_{t-1}\mathbf{B}_1^\top \Sigmac^{-1}) \otimes \Sigmar^{-1} \right] \text{vec}(\mathbf{E}_t)\\
    \sum_{t=2}^{T}\left[(\mathbf{Y}_{t-1}^\top \mathbf{A}_1^\top \Sigmar^{-1}) \otimes \Sigmac^{-1} \right] \text{vec}(\mathbf{E}_t^\top)\\
    - \frac{m}{2}(T-1)\mathbf{D}_n^\top\text{vec}(\Sigmac^{-1})+ \frac{1}{2}\sum_{t=2}^{T}\mathbf{D}_n^\top(\Sigmac^{-1} \otimes \Sigmac^{-1})
    \text{vec}(\mathbf{E}_t^{\top} \Sigmar^{-1} \mathbf{E}_t) \\
    -\frac{n}{2}(T-1)\mathbf{D}_m^\top\text{vec}(\Sigmar^{-1}) + \frac{1}{2}\sum_{t=2}^{T}\mathbf{D}_m^\top
    \left(\Sigmar^{-1} \otimes \Sigmar^{-1}\right)\text{vec}\left(\mathbf{E}_t\Sigmac^{-1}\mathbf{E}_t^\top\right)
\end{bmatrix},
\end{equation} 
where $\mathbf{D}_n$ is the duplication matrix that projects from the $n$-dimensional space onto the $n(n+1)/2$-dimensional space (the same is
for $\mathbf{D}_m$), \textit{i.e.}, $\text{vec}(\mathbf{X})=\mathbf{D}_n \text{vech}(\mathbf{X})$. We can compute the asymptotic covariance
matrix for the reduced-form parameters with the Fisher's Information matrix
\[
\bm{\mathcal{I}}(\bm{\beta}) = \mathbb{E}\left[\mathbf{s}_t(\bm{\beta})\mathbf{s}_t(\bm{\beta})^\top\right]
\]
which, under regularity conditions, asymptotically converges to $-\mathbb{E}[\bm{\mathcal{H}}(\bm{\beta})]$, where $\bm{\mathcal{H}}(\bm{\beta})$ is the
Hessian matrix of the log-likelihood function. Therefore, evaluating the score at the optimum, we can estimate the asymptotic covariance matrix
using the outer-product-of-gradients (OPG) estimator
$\widehat{\text{Var}}(\hat{\bm{\beta}}) = \left[ \sum_{t=2}^T \mathbf{s}_t(\hat{\bm{\beta}})\mathbf{s}_t(\hat{\bm{\beta}})^\top \right]^{-1}$.

In our case, we need to find the standard errors of the structural impact parameters.  Since the structural matrix ${\mathbf{Q}}^{-\top}$ is a
non-linear transformation of the reduced-form parameters, the standard errors of $\hat{\mathbf{Q}}^{-\top}$ are computed via the delta
method\footnote{As a robust alternative to ensure validity in finite samples, standard errors can also be obtained through a residual-based
  bootstrap procedure, which accounts for the potential non-normality of the innovations $\mathbf{E}_t$.}. To this end, we define the
$n(n-1)/2$-dimensional vector containing the sub-diagonal elements of $\hat{\mathbf{Q}}^{-\top}$,
$\operatorname{vecl}(\hat{\mathbf{Q}}^{-\top})$. Proposition \ref{prop:jacobian} provides the Jacobian of $\text{vecl}(\hat{\mathbf{Q}}^{-\top})$
with respect to $\operatorname{vech}(\hat{\bm{\Sigma}}_c)$.
\begin{proposition}
  \label{prop:jacobian}
  Let $\hat{\bm{\Sigma}}_c$ be the ML estimate of the reduced-form column covariance matrix,
  $\hat{\mathbf{C}}_0 = \operatorname{chol}(\hat{\bm{\Sigma}}_c^{-1})$ its upper triangular Cholesky factor, and
  $\hat{\mathbf{Q}} = \hat{\mathbf{C}}_0 \langle\hat{\mathbf{C}}_0\rangle^{-1}$ the structural unit upper triangular matrix, where
  $\langle\hat{\mathbf{C}}_0\rangle = \operatorname{diag}(\hat{\mathbf{C}}_0)$. The Jacobian
  $\mathbf{J} = \frac{\partial\,\operatorname{vecl}(\hat{\mathbf{Q}}^{-\top})}{\partial\,\operatorname{vech}(\hat{\bm{\Sigma}}_c)^{\top}}$ is
\[
  \mathbf{J} = \mathbf{L}_n^s \left[ \bigl(\hat{\mathbf{Q}}^{-1} \otimes \hat{\mathbf{C}}_0^{-\top}\bigr) -\sum_{k=1}^n
    \bigl(\mathbf{H}_k \otimes \hat{\mathbf{C}}_0^{-\top}\mathbf{H}_k\bigr)\right] \mathbf{T}_n\,
  \hat{\mathbf{M}}^{-1}\mathbf{D}_n^+\,(\hat{\bm{\Sigma}}_{\text{c}}^{-1}\otimes\hat{\bm{\Sigma}}_{\text{c}}^{-1})\,\mathbf{D}_n,
\]
where
$\hat{\mathbf{M}} = \mathbf{D}_n^+[(\hat{\mathbf{C}}_0^\top\otimes\mathbf{I}_n)
+(\mathbf{I}_n\otimes\hat{\mathbf{C}}_0^\top)\mathbf{K}_{nn}]\mathbf{T}_n$, $\mathbf{H}_k=\bm{\eta}_k\bm{\eta}_k^\top$ (where $\bm{\eta}_k$ is
the $k$-th canonical basis vector of $\mathbb{R}^n$), $\mathbf{L}_n^s$ is the strict elimination matrix selecting only the below-diagonal
elements, $\mathbf{T}_n$ is the lower triangular selection matrix, and $\mathbf{D}_n$, $\mathbf{D}_n^+$, $\mathbf{K}_{nn}$ are the duplication
matrix, its Moore--Penrose inverse, and the commutation matrix, respectively.
\end{proposition}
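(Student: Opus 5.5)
The plan is to write the map $\operatorname{vech}(\hat{\bm{\Sigma}}_c)\mapsto\operatorname{vecl}(\hat{\mathbf{Q}}^{-\top})$ as a composition of three differentiable maps and apply the chain rule. Throughout, write $\mathbf{L}=\hat{\mathbf{C}}_0^\top$ (lower triangular, positive diagonal), $\bm{\Lambda}=\langle\hat{\mathbf{C}}_0\rangle=\operatorname{diag}(\mathbf{L})$, and drop hats. The three maps are
\[
\operatorname{vech}(\bm{\Sigma}_c)\;\mapsto\;\operatorname{vech}(\bm{\Sigma}_c^{-1})\;\mapsto\;\operatorname{vech}(\mathbf{L})\;\mapsto\;\operatorname{vecl}(\mathbf{Q}^{-\top}).
\]
I will compute each Jacobian from matrix differentials, vectorize with $\operatorname{vec}(\mathbf{A}\mathbf{X}\mathbf{B})=(\mathbf{B}^\top\otimes\mathbf{A})\operatorname{vec}(\mathbf{X})$, and move between vec and vech coordinates using $\mathbf{D}_n$, $\mathbf{D}_n^+$ and $\mathbf{T}_n$.

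\textbf{Step 1 (inversion).} From $d(\bm{\Sigma}_c^{-1})=-\bm{\Sigma}_c^{-1}(d\bm{\Sigma}_c)\bm{\Sigma}_c^{-1}$ I get
\[
d\operatorname{vec}(\bm{\Sigma}_c^{-1})=-(\bm{\Sigma}_c^{-1}\otimes\bm{\Sigma}_c^{-1})\,d\operatorname{vec}(\bm{\Sigma}_c).
\]
Using symmetry, $d\operatorname{vec}(\bm{\Sigma}_c)=\mathbf{D}_n\,d\operatorname{vech}(\bm{\Sigma}_c)$, and applying $\mathbf{D}_n^+$ then gives
\[
d\operatorname{vech}(\bm{\Sigma}_c^{-1})=-\mathbf{D}_n^+(\bm{\Sigma}_c^{-1}\otimes\bm{\Sigma}_c^{-1})\mathbf{D}_n\,d\operatorname{vech}(\bm{\Sigma}_c).
\]
I will carry this minus sign forward to Step 3.

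\textbf{Step 2 (Cholesky factor).} Differentiate $\bm{\Sigma}_c^{-1}=\hat{\mathbf{C}}_0^\top\hat{\mathbf{C}}_0$, with the free parameters being the lower-triangular entries of $\mathbf{L}$, so that $\operatorname{vec}(d\mathbf{L})=\mathbf{T}_n\,d\operatorname{vech}(\mathbf{L})$. The product rule gives two terms. One is vectorized directly; the other involves the transpose $d\mathbf{L}^\top$ and is converted with $\mathbf{K}_{nn}$. This yields
\[
d\operatorname{vec}(\bm{\Sigma}_c^{-1})=\bigl[(\hat{\mathbf{C}}_0^\top\otimes\mathbf{I}_n)+(\mathbf{I}_n\otimes\hat{\mathbf{C}}_0^\top)\mathbf{K}_{nn}\bigr]\mathbf{T}_n\,d\operatorname{vech}(\mathbf{L}),
\]
and premultiplying by $\mathbf{D}_n^+$ gives $d\operatorname{vech}(\bm{\Sigma}_c^{-1})=\hat{\mathbf{M}}\,d\operatorname{vech}(\mathbf{L})$. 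I then invert to obtain $d\operatorname{vech}(\mathbf{L})=\hat{\mathbf{M}}^{-1}d\operatorname{vech}(\bm{\Sigma}_c^{-1})$.

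This step needs $\hat{\mathbf{M}}$ to be nonsingular, which I expect to be the main obstacle. My argument would be that the Cholesky map from lower-triangular matrices with positive diagonal onto positive definite matrices is a diffeomorphism, so its derivative, which is $\hat{\mathbf{M}}$, is invertible. Alternatively, one can order the triangular coordinates so that $\hat{\mathbf{M}}$ is triangular with diagonal entries $2\ell_{ii}$ or $\ell_{jj}$, all strictly positive. I must also take care that the Kronecker ordering (which factor sits with $\mathbf{I}_n$) matches the convention $\hat{\mathbf{C}}_0^\top=\mathbf{L}$ used in the statement, since this is where a transposition slip would most likely occur.

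\textbf{Step 3 (normalization).} Since $\mathbf{Q}=\hat{\mathbf{C}}_0\bm{\Lambda}^{-1}$, we have $\mathbf{Q}^{-\top}=\mathbf{L}^{-1}\bm{\Lambda}$. Its differential is
\[
d(\mathbf{Q}^{-\top})=-\mathbf{L}^{-1}(d\mathbf{L})\mathbf{L}^{-1}\bm{\Lambda}+\mathbf{L}^{-1}d\bm{\Lambda}.
\]
For the first term, note that $(\mathbf{L}^{-1}\bm{\Lambda})^\top=\mathbf{Q}^{-1}$, so it vectorizes to $-(\mathbf{Q}^{-1}\otimes\hat{\mathbf{C}}_0^{-\top})\operatorname{vec}(d\mathbf{L})$. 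For the second term, I write $d\bm{\Lambda}=\sum_k\mathbf{H}_k(d\mathbf{L})\mathbf{H}_k$, which vectorizes to $\sum_k(\mathbf{H}_k\otimes\hat{\mathbf{C}}_0^{-\top}\mathbf{H}_k)\operatorname{vec}(d\mathbf{L})$, using $\mathbf{H}_k^\top=\mathbf{H}_k$. Together,
\[
d\operatorname{vec}(\mathbf{Q}^{-\top})=-\Bigl[(\mathbf{Q}^{-1}\otimes\hat{\mathbf{C}}_0^{-\top})-\sum_{k=1}^n(\mathbf{H}_k\otimes\hat{\mathbf{C}}_0^{-\top}\mathbf{H}_k)\Bigr]\mathbf{T}_n\,d\operatorname{vech}(\mathbf{L}).
\]
Premultiplying by $\mathbf{L}_n^s$ keeps only the strictly sub-diagonal entries. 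This is the right selection because the diagonal of $\mathbf{Q}^{-\top}$ is identically one, so those entries carry no information.

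\textbf{Assembly.} Chaining Steps 3, 2 and 1 in that order gives exactly the displayed product in the proposition. The minus sign from Step 1 cancels the minus sign in Step 3, which is why no sign appears in the final formula.
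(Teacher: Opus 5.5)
Your proposal is correct and follows the paper's proof essentially step for step. It uses the same inversion Jacobian $-\mathbf{D}_n^+(\hat{\bm{\Sigma}}_{\text{c}}^{-1}\otimes\hat{\bm{\Sigma}}_{\text{c}}^{-1})\mathbf{D}_n$, the same differential of $\hat{\bm{\Sigma}}_{\text{c}}^{-1}=\hat{\mathbf{C}}_0^\top\hat{\mathbf{C}}_0$ producing $\hat{\mathbf{M}}$, the same differential of $\hat{\mathbf{Q}}^{-\top}=\hat{\mathbf{C}}_0^{-\top}\langle\hat{\mathbf{C}}_0\rangle$ with the $\mathbf{H}_k$ diagonal selection, and the same cancellation of the two minus signs. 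The only bookkeeping difference is that the paper attaches $\mathbf{T}_n$ to the Cholesky-step Jacobian rather than the normalization step, and your triangular-structure argument for the nonsingularity of $\hat{\mathbf{M}}$ (diagonal entries equal to those of $\hat{\mathbf{C}}_0$ or twice them) usefully fills in a point the paper only asserts.
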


\begin{proof}
  See Appendix \ref{sec:proof}.
\end{proof}

Applying the delta method, the asymptotic distribution of the structural parameters is given by
\begin{equation}
  \label{eq:C0dist}
  \sqrt{T-1}\left(\text{vecl}(\hat{\mathbf{Q}}^{-\top}) 
    - \text{vecl}(\mathbf{Q}^{-\top})\right) 
  \xrightarrow{d} \mathcal{N}\!\left(\mathbf{0},\; 
    \mathbf{J}\, \mathbf{V}_{\Sigmac}\, 
    \mathbf{J}^{\top}\right),
\end{equation}
where $\mathbf{V}_{\Sigmac}$ is the asymptotic covariance matrix of the MLE estimator of the $n(n+1)/2$ singular elements of $\bm{\Sigmac}$.

\subsection{Structural impulse response functions}
\label{sec:IRFs}

Our approach computes structural impulse response functions (IRFs) using an explicit structural decomposition, in contrast to \cite{Chen2021}.
In fact, in our SMAR model the reduced-form errors ($\mathbf{E}_t$) are generated by a set of underlying uncorrelated structural shocks
($\mathbf{U}_t$) and the link is provided by the structural error in Equation \eqref{eq:Et}. Once we identify $\mathbf{C}_0$, we obtain the
orthogonal shocks and thus can trace their causal impact through the IRFs. The orthogonalization is thus performed on the column-wise
relationships, as captured by the matrix $\mathbf{C}_0$.

To compute the IRFs, for the sake of simplicity, we consider the vectorized form of a stable MAR(1) model
\[
  \mathbf{y}_t = \bm\Phi_1 \mathbf{y}_{t-1} + \mathbf{e}_t,
\]
where $\mathbf{y}_t = \text{vec}(\mathbf{Y}_t)$, $\bm \Phi_1 = (\mathbf{B}_1 \otimes \mathbf{A}_1)$ are the $mn \times mn$ coefficient matrices,
$\mathbf{e}_t = \text{vec}(\mathbf{E}_t)$ is the $mn \times 1$ vector of reduced-form errors. The Wold representation of this model is given by
\begin{equation}
  \label{eq:MA}
    \mathbf{y}_t = \sum_{h=0}^{\infty} \bm\Psi_h \mathbf{e}_{t-h},
\end{equation}
where each $mn \times mn$ matrix $\bm\Psi_h$ represent the $h$-th impulse response of $\mathbf{y}_t$ to the reduced-form shocks
$\mathbf{e}_t$. These matrices can be computed recursively \citep{Lutkepohl1990}, with $\bm\Psi_0 = \mathbf{I}_{mn}$, and
\[
    \bm\Psi_h = \bm\Phi_1^h = (\mathbf{B}_1 \otimes \mathbf{A}_1)^h = (\mathbf{B}_1^h \otimes \mathbf{A}_1^h) \qquad \text{for } h \geq 1.
\]
In the SMAR model, we have that $\mathbf{E}_t =\mathbf{R}_0^{-1}\mathbf{U}_t \mathbf{C}_0^{-\top}$, which in vectorized form becomes
\[
  \mathbf{e}_t = \text{vec}(\mathbf{R}_0^{-1}\mathbf{U}_t \mathbf{C}_0^{-\top}) = \left[\mathbf{C}_0^{-1} \otimes \mathbf{R}_0^{-1}\right] \mathbf{u}_t,
\]
where $\mathbf{u}_t = \text{vec}(\mathbf{U}_t)$. Let $\mathbf{S} = \left[\mathbf{C}_0^{-1} \otimes \mathbf{R}_0^{-1}\right]$ be the matrix that
maps the structural shocks to the reduced-form shocks. Substituting it in Equation \eqref{eq:MA}, yields
\[
    \mathbf{y}_t = \sum_{h=0}^\infty \bm\Psi_h(\mathbf{S}\mathbf{u}_{t-h}) = \sum_{h=0}^\infty (\bm\Psi_h \mathbf{S})\mathbf{u}_{t-h}.
\]
The structural impulse response coefficient denotes the response of $y_{i,t+h}$ to a unit shock in $u_{j,t}$ which, from the MA version of the
vectorized MAR in Equation \eqref{eq:MA}, is equal to
\[
  \frac{\partial y_{i,t+h}}{\partial u_{j,t}} = \bm{\eta}_i^\top \left(\bm{\Psi}_h \mathbf{S}\right) \bm{\eta}_j,
\]
where $\bm{\eta}_i$ and $\bm{\eta}_j$ represent the $i$-th and $j$-th columns of the identity matrix $\mathbf{I}_{mn}$. Since we imposed
$\mathbf{R}_0 = \mathbf{I}_m$, in our model we have $\mathbf{S}=\mathbf{C}_0^{-1} \otimes \mathbf{I}_m.$

To compute the confidence intervals of the structural IRFs, we employ a residual bootstrap procedure \citep{Lutkepohl1990, Kilian1998}. Let
$\hat{\mathbf{E}}_t = \mathbf{Y}_t - \hat{\mathbf{A}}_1 \mathbf{Y}_{t-1} \hat{\mathbf{B}}_1^{\top}$ denote the matrix of reduced-form residuals
obtained from the estimated SMAR model, and let $\hat{\mathbf{C}}_0$ be the corresponding structural matrix, we obtain a bootstrap sequence of
innovations $\{\mathbf{E}_t^*\}_{t=1}^T$ by resampling from $\{\hat{\mathbf{E}}_t\}$ with replacement. Then, a bootstrap time series
$\{\mathbf{Y}_t^*\}_{t=1}^T$ is generated recursively using the estimated matrices of parameters as follows:
\begin{equation}
    \mathbf{Y}_t^* = \hat{\mathbf{A}}_1 \mathbf{Y}_{t-1}^* \hat{\mathbf{B}}_1^{\top} + \mathbf{E}_t^*,
\end{equation}
initializing at $\mathbf{Y}_0^* = \mathbf{Y}_0$. The SMAR model is then re-estimated on $\{\mathbf{Y}_t^*\}_{t=1}^{T}$ to obtain bootstrap
estimates $\hat{\mathbf{A}}_{1,i}^*$, $\hat{\mathbf{B}}_{1,i}^*$, and $\hat{\mathbf{C}}_0^*$, from which the corresponding structural IRFs,
$\bm{\Theta}_h^* = \bm{\Psi}_h^* \mathbf{S}^*$, are computed as in Equation~\eqref{eq:MA}. Repeating this procedure $V$ times yields the
bootstrap distribution $\{\bm{\Theta}_h^{*(\nu)}\}_{\nu=1}^V$, from which pointwise $(1-\alpha)$ confidence bands at horizon $h$ are constructed
as the empirical $\alpha/2$ and $1-\alpha/2$ quantiles \citep{Hall1992}.

Since each asset $i$ enters the Kronecker structure symmetrically and we do not impose heterogeneous restrictions across assets in
$\mathbf{R}_0 = \mathbf{I}_m$, we report the cross-sectional average of the structural IRF for each variable pair $(j \to k)$:
\begin{equation}
  \label{eq:avg_IRF}
    \bar{\theta}_{jk,h} = \frac{1}{m} \sum_{i=1}^{m}
    \bm{\eta}_{(k-1)m+i}^\top \left(\bm{\Psi}_h \mathbf{S}\right)
    \bm{\eta}_{(j-1)m+i},
\end{equation}
where the index $(k-1)m+i$ selects asset $i$ of variable $k$ in the $\operatorname{vec}(\mathbf{Y}_t)$ ordering. Averaging across assets exploits
the panel dimension of the data and yields a parsimonious summary of the common propagation mechanism shared by the $m$ assets.

%%%%%%%%%%%%%%%%%%%%%%%%%%%%%%%%%%%%%%%%%%%%%%%%%%%%%%%%%%%%%%%%%%%%%%%%%%%%%%%%%%%%%%%%%%%%%%%%%%%%%%%%%

\section{SMAR model for returns, volatilities and volumes}
\label{sec:SMARfinance}

In this section, we present the structural form, \textit{i.e.}, the constraints on the $\mathbf{C}_0$ matrix in our model. Although the
relationship between trading volumes, volatility, and returns is often complex and context-dependent, the structure of $\mathbf{C}_0$ is imposed
by combining financial theory and empirical evidence. First, we refer to the Efficient Market Hypothesis \citep{Fama1970}, which postulates that
asset prices rapidly incorporate all available information, thus making financial returns unpredictable. Second, we consider the MDH, which
establishes that both trading volume and volatility are influenced by a latent factor that typically corresponds to the flow of new information
arriving in the market. Therefore, when new information becomes available, market participants tend to increase their trading activity, which
often results in greater price fluctuations. However, return shocks have an immediate impact on both volatility and volume. Conversely,
simultaneous changes in volatility or trading volume are very unlikely to influence returns within the same period \citep{Whitelaw1994}.

Despite trading volume and volatility generally exhibit a positive correlation, the direction of causality is likely mixed. Some studies suggest
that increased trading activity increases volatility, demonstrating that volume acts as a proxy for information inflow \cite[see, among
others,][]{harris87,Lamoureux1990,lebaron92}. In reality, the relationship is likely bidirectional, with feedback effects reinforcing both
variables. The MDH \citep{Tauchen1983} explains that a new flow of information can simultaneously trigger a shock to price dispersion
and trading activity. Empirical evidence consistently supports this mechanism across different markets and sample periods
\citep{Karpoff1987,Gallant1992, Andersen1996}. Based on this hypothesis, we choose to treat volume as the most endogenous variable, since it
simultaneously responds to return shocks through leverage \citep{Black1976, Christie1982} and reacts to volatility shocks through the MDH
channel. This shock structure (from asset returns to volatility and then to trading volume) is consistent with the idea that information shocks
first influence prices, then generate a reassessment of risk reflected in volatility, and finally influence trading activity
\citep{Lamoureux1990, Bessembinder1993}.

To exemplify the structural representation, we discuss a two-asset case in which the matrix of time series is 
\begin{equation}
\mathbf{Y}_t =
\begin{bmatrix}
\log\tv{1,t} & \log\bp{1,t} & \ret{1,t}  \\
\log\tv{2,t} & \log\bp{2,t} & \ret{2,t}  \\
\end{bmatrix},
\end{equation}
where $\tv{i,t}$ is the trading volume of asset $i$ at time $t$. $\bp{i,t}$ is the bipower variation of
asset $i$ at time $t$:
\[
  \bp{i,t} = \frac{\pi}{2} \sum_{\tau= 1}^{N_t-1} |r_{i,\tau}|\cdot |r_{i,\tau+1}|
\]
where $r_{i,\tau}$ being the $\tau$-th intraday return of asset $i$, and $N_t$ is the number of intraday observations on the $t$-th day, that we
use as a proxy for volatility. In addition, daily returns are standardized through a GARCH(1,1) model as follows:
\[
  \ret{i,t} = \frac{\Delta \log p_{i,t}}{\hat{\sigma}_{i,t}}
\]
where $p_{i,t}$ is the $t$-th price of asset $i$ and $\hat{\sigma}_{i,t}$ is the conditional volatility estimated with the GARCH(1,1) model.

Given that the matrix $\mathbf{C}_0$ is an upper triangular \citep{kilian2017}, we normalize the effect of the structural shocks on the errors of
the reduced-form model by introducing the column-normalized lower triangular matrix
\[
  \mathbf{Q}^{-\top} = \mathbf{C}_0^{-\top}\langle\mathbf{C}_0\rangle =
\begin{bmatrix}
  1      & 0     & 0\\
  c_{21}  & 1     & 0\\
  c_{31}  & c_{32} & 1    
\end{bmatrix},
\]
and the row-normalized triangular matrix
\[
  \mathbf{W}^{-1} = \langle\mathbf{R}_0\rangle\mathbf{R}_0^{-1} =
  \begin{bmatrix}
    1 & 0\\
    \rho_{21} & 1\\
  \end{bmatrix},
\]
where $\langle\hat{\mathbf{R}}_0\rangle = \operatorname{diag}(\mathbf{R}_0)$. It follows that the error of the model in Equation
\eqref{eq:reduced} can be computed as
\[
  \mathbf{E}_t = \mathbf{R}_0^{-1}\mathbf{U}_t\mathbf{C}_0^{-\top} = \langle\mathbf{R}_0\rangle^{-1}\mathbf{W}^{-1} \mathbf{U}_t
  \mathbf{Q}^{-\top}\langle\mathbf{C}_0\rangle^{-1},
\]
that is
\[
  \begin{bmatrix}
    \e{e}{vol}{1,t} &  \e{e}{rbp}{1,t} & \e{e}{ret}{1,t}\\
    \e{e}{vol}{2,t} &  \e{e}{rbp}{2,t} & \e{e}{ret}{2,t}
  \end{bmatrix}=\langle\mathbf{R}_0\rangle^{-1}\begin{bmatrix}
    1 & 0\\
    \rho_{21} & 1\\
  \end{bmatrix}\begin{bmatrix}
    \e{u}{vol}{1,t} &  \e{u}{rbp}{1,t} & \e{u}{ret}{1,t}\\
    \e{u}{vol}{2,t} &  \e{u}{rbp}{2,t} & \e{u}{ret}{2,t}
  \end{bmatrix} \begin{bmatrix}
    1      & 0     & 0\\
    c_{21} & 1     & 0\\
    c_{31} & c_{32} & 1\\
  \end{bmatrix}\langle\mathbf{C}_0\rangle^{-1}.
\]
Defining $\tilde{\mathbf{E}}_t = \mathbf{W}^{-1} \mathbf{U}_t \mathbf{Q}^{-\top}$, which are the standardized errors of the reduced-form model,
we get
\[
  \begin{split}
    &\ee{vol}{1,t} = \e{u}{vol}{1,t} + \e{u}{rbp}{1,t} c_{21} + \e{u}{ret}{1,t}c_{31}\\
    &\ee{rbp}{1,t} = \e{u}{rbp}{1,t} + \e{u}{ret}{1,t} c_{32}\\
    &\ee{ret}{1,t} = \e{u}{ret}{1,t}
  \end{split}
\]
for the first row of $\tilde{\mathbf{E}}_t$, and
\begin{equation}\label{eq:system}
\begin{split}
&\ee{vol}{2,t} = (\rho_{21}\e{u}{vol}{1,t} +\e{u}{vol}{2,t}) + (\rho_{21}\e{u}{rbp}{1,t} + \e{u}{rbp}{2,t}) c_{21} + (\rho_{21}\e{u}{ret}{1,t} + \e{u}{ret}{2,t})c_{31}\\
&\ee{rbp}{2,t} = (\rho_{21}\e{u}{rbp}{1,t} +\e{u}{rbp}{2,t}) + (\rho_{21}\e{u}{ret}{1,t} + \e{u}{ret}{2,t})c_{32}\\
&\ee{ret}{2,t} = \rho_{21}\e{u}{ret}{1,t} +\e{u}{ret}{2,t}
\end{split}
\end{equation}
for the second row of $\tilde{\mathbf{E}}_t$.

This decomposition highlights the explanatory power of the model. Indeed, the error equations for the first asset (the $n$-dimensional vector
$\tilde{\mathbf{e}}_{1,t}$) show how structural shocks to a financial variable simultaneously impact other variables within the same asset. The
parameters $c_{ij}$ quantify the magnitude of these effects across variables within the same asset.

The error equations for the second asset ($\tilde{\mathbf{e}}_{2,t}$) concern contagion between assets. The parameter $\rho_{21}$ governs the
extent to which structural shocks originating in asset 1 are transmitted to asset 2. For example, the term $\rho_{21} \e{u}{ret}{1,t}+\e{u}{ret}{2,t}$
represents the total return shock affecting asset 2, which is a combination of its shock and the portion spilling over from asset 1. As already
stated at the end of Section \ref{sec:identify}, we set $\mathbf{R}_0 = \mathbf{I}_m$ to avoid any contagion effect between all
assets. Analytically, the system \eqref{eq:system} reduces to
\begin{equation}\label{eq:system2} 
    \begin{split}
& \ee{vol}{2,t} = \e{u}{vol}{2,t} + \e{u}{rbp}{2,t}c_{21} + \e{u}{ret}{2,t}c_{31}\\
& \ee{rbp}{2,t} = \e{u}{rbp}{2,t} + \e{u}{ret}{2,t}c_{32}\\
& \ee{ret}{2,t} = \e{u}{ret}{2,t}.
\end{split}
\end{equation}

%%%%%%%%%%%%%%%%%%%%%%%%%%%%%%%%%%%%%%%%%%%%%%%%%%%%%%%%%%%%%%%%%%%%%%%%%%%%%%%%%%%%%%%%%%%%%%%%%%%%%%%%%

\section{Empirical analysis}
\label{sec:Empirical_application}

Our sample consists of the daily prices and volumes of the assets comprising the Dow Jones Industrial Average (as of April 2025), collected
between January 2021 and February 2025 ($T=1059$ observations). The data are sourced from Bloomberg. The variables used in the analysis are
computed as shown in Section \ref{sec:SMARfinance}. Moreover, to avoid estimating the constant-value matrix\footnote{Through a preliminary
  analysis (available upon request from the authors) of the time series, we noted that the inclusion of a constant in the SMAR model can generate
  collinearity problems for a large number of assets, and can often lead to the estimation of singular matrices. This also affects the estimation
  time, which becomes increasingly longer as $m$ increases.}, all series are standardized. Based on the results of the Bayesian Information
Criterion (BIC), we specify a SMAR(1).

As further detailed in Section \ref{sec:Estimation}, the reduced-form MAR(1) defined as in Equation \eqref{eq:reduced} is estimated via maximum
likelihood. The model is stationary since the product of the spectral radii is
$\rho(\hat{\mathbf{A}}_1)\,\rho(\hat{\mathbf{B}}_1)\approx 0.989$ \citep[see][]{Chen2021}. Equation \eqref{eq:B1hat} reports the estimated
$\hat{\mathbf{B}}_1$, with corresponding standard errors provided in parentheses and bold values indicating a 5\% significance:
\begin{equation}
  \label{eq:B1hat}
  \hat{\mathbf{B}}_1=\begin{bmatrix*}[r]
    \underset{(0.0199)}{\bf 3.5765}  &  \underset{(0.0159)}{\bf -0.3975}  & \underset{(0.0114)}{\bf -0.0460}\\
    \underset{(0.0118)}{\bf 0.6307} & \underset{(0.0155)}{\bf 1.9885} & \underset{(0.0105)}{0.0072}\\
    \underset{(0.0146)}{0.0188} & \underset{(0.0180)}{-0.0122} & \underset{(0.0142)}{0.0041}
  \end{bmatrix*}.
\end{equation}
For instance, the first column shows the impact on financial variables from past volumes. Consistently with the literature on volatility
clustering and long-memory properties in market activity, both trading volume and volatility exhibit a high degree of persistence. We also
observe a significant cross-variable feedback between volume and volatility. Specifically, past trading volume has a positive and significant
impact on current volatility, supporting the idea that intense activity anticipates higher price dispersion. Conversely, past volatility has a
significant effect on current volume, suggesting a potential ``cooling off'' effect in trading activity following periods of high
uncertainty. Finally, the coefficients related to lagged returns in the equations for volume and volatility are relatively small, although the
negative impact on volume is statistically significant. This somehow provides evidence of asymmetric dynamics, where past price movements
slightly influence market participation. In line with the efficient market hypothesis discussed in Section \ref{sec:identify}, asset returns
appear to be largely independent of the lagged values of all variables.

Then, using the estimated covariance matrix $\hat{\bm{\Sigma}}_{\text{c}}$, we derive the inverse of the contemporaneous column-wise matrix with
normalized diagonal, $\hat{\mathbf{Q}}^{-\top}$, which takes the following values (with standard errors reported in parentheses and bold values
indicating a 5\% significance):
\begin{equation}
  \label{eq:C0hat}
  \hat{\mathbf{Q}}^{-\top}=
  \begin{bmatrix*}[r]
    1.0000 & 0.0000 & 0.0000\\
    \underset{(0.0051)}{\bf 0.5075} & 1.0000 & 0.0000\\
    \underset{(0.0024)}{\bf -0.0460} &\underset{(0.0020)}{\bf -0.0473} & 1.0000
  \end{bmatrix*}.
\end{equation}
The interpretation of the estimated inverse of $\hat{\mathbf{Q}}^\top$ in Equation \eqref{eq:C0hat} provides a clear and economically significant
causal ordering. The highest coefficient is 0.5075, which indicates that a one-standard-deviation structural shock to market volatility (RBP)
leads to a strong, significant, and immediate positive response in trading volumes. This finding suggests that unexpected changes in market
uncertainty are a primary driver of contemporaneous trading activity, as market participants rapidly adjust their positions in response.

Furthermore, we find that a structural shock to returns has a negative impact on both trading volume (-0.0460) and volatility (-0.0473). This is
consistent with the leverage effect \citep{Black1976}: negative return shocks are associated with an immediate increase in perceived risk. The
nearly equal magnitude of the two coefficients suggests that the information content of the return shocks is transmitted symmetrically to the
other variables, rather than being absorbed first by one and then by the other. Notably, both effects are small relative to the volatility-volume
coefficient ($0.5075$), reinforcing the view that uncertainty about future prices, rather than the return itself, is the dominant contemporaneous
driver of trading activity.

Then, we analyze the average inverse response functions in Figure \ref{fig:IRFs}. Looking at the diagonal (\textit{i.e.}, self-responses to a
shock), it can be noticed that the self-response to a shock in returns tends to zero immediately after one lag. This is in large part coherent
with the theory of market efficiency by \citet{Fama1970}. Conversely, both the self-responses of volume and RBP decay slowly, consistent with the
volatility clustering theory \citep{Engle1982, Andersen2001b} and long-memory of the trading volume \citep{Bessembinder1993, Lobato2000}.

\begin{figure}[htbp]
  \centering
  \caption{Average structural impulse response function}
  \label{fig:IRFs}
  \includegraphics[width=\textwidth]{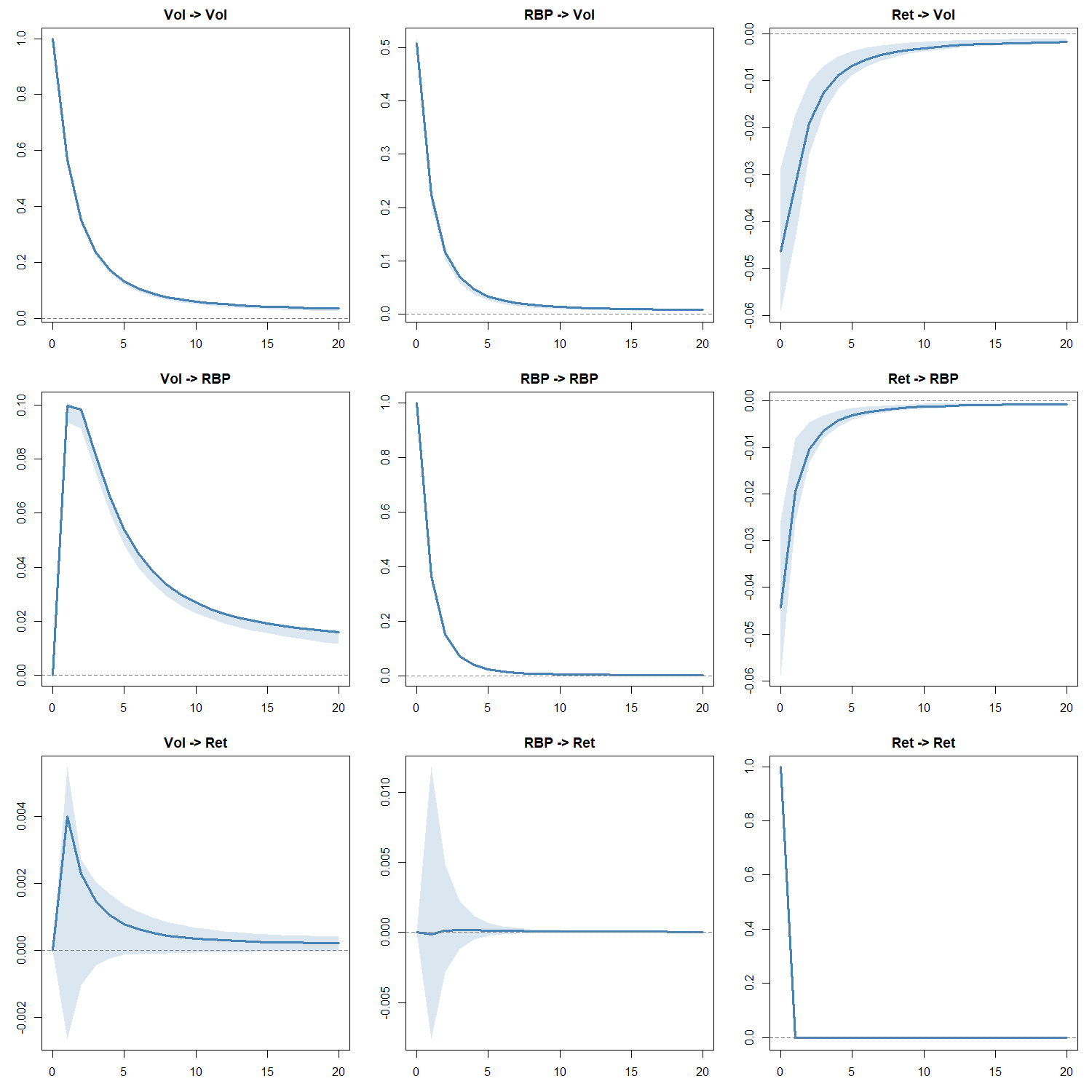} 
  \begin{flushleft}
    \note{The light blue area represents the 90\% bootstrapped confidence interval on 200 replicates. The first variable is the one shocked,
      while the second one is the response.}
  \end{flushleft}
\end{figure}

Interestingly, a shock in the realized bipower leads to an immediate and persistent increase in the volume. This finding confirms what is stated in
the MDH: news arrives and simultaneously moves prices and volumes. When we analyze the opposite direction (\textit{i.e.}, the effect of a shock
in the volumes on $\bp{}$), it emerges that the increase in the volatility is slightly lagged. Therefore, the intense trading activity precedes or
feeds the dispersion in prices in the following periods. It should be further noticed that the magnitude of the response of trading volume to a
shock in $\bp{}$ is markedly different from the opposite. This implies market participants massively
react increasing their trading activity when prices start fluctuating. In other words, volatility generates uncertainty, thus it forces buyers
and sellers to immediately adjust their investment strategies. In addition, we find that the effect of a shock in the volumes on volatility is
much more persistent than the opposite. This asymmetry in persistence constitutes, to our knowledge, a novel finding in the structural analysis
of the volume--volatility nexus.

The existing literature has primarily documented the contemporaneous and short-run relationship between volumes and
volatility. \citet{Gallant1992}, using a semi-nonparametric bivariate framework, provide impulse response profiles for stock prices and volume
and document that volume reacts strongly to price innovations, but do not examine the dynamic propagation between trading volume and volatility,
nor distinguish the persistence of shocks across these two variables. \citet{Tauchen1983} and \citet{Andersen1996} focus on the contemporaneous
co-movement implied by the latent information flow under the MDH, without characterizing the differential decay of cross-variable impulse
responses at medium horizons. \citet{Bollerslev1999}, examining the common fractional integration and long-run dependence between volume and
volatility, find that both variables share a common fractionally integrated component driven by information arrival, but their framework
abstracts from the short-to-medium-run asymmetry in impulse propagation that we document here.

The key contribution of our structural identification is that, by orthogonalizing shocks via $\hat{\mathbf C}_0$, we can isolate the dynamic
effect of structural volume innovations on volatility. The resulting persistence is consistent with market microstructure models in which
information is incorporated gradually through order flow \citep{Kyle1985}, implying that the impact of trading activity on price variability
unfolds over several trading days rather than instantaneously.

\subsection{Decomposition of Trading Volume Variance}

One of the aims of this paper is to disentangle the component of trading volume driven by informational shocks from that attributable to
liquidity dynamics. The structural framework introduced in Section \ref{sec:Model} makes this decomposition possible through the forecast error
variance decomposition (FEVD) of the model.

Since the SMAR operates on an $m \times n$ data matrix time series, the forecast error variance of trading volume for any given asset $i$
receives contributions from two distinct sources: shocks originating within the same asset (column-wise effects, governed by
$\mathbf{C}_0$) and shocks originating in other assets (row-wise effects, transmitted through the lagged dynamics of
$\hat{\mathbf{A}}_1$). We refer to these as the within-asset and cross-asset components, respectively.

Formally, we derive the FEVD from the structural moving average representation established in Section~\ref{sec:IRFs}. Recall that the vectorized
SMAR admits the representation
\begin{equation}
  \label{eq:SMA}
  \mathbf{y}_t = \sum_{l=0}^{\infty} \boldsymbol{\Theta}_l\, \mathbf{u}_{t-l},
  \qquad \boldsymbol{\Theta}_l \equiv \boldsymbol{\Psi}_l \mathbf{S},
\end{equation}
where $\mathbf{S} = \left[\mathbf{C}_0^{-1} \otimes \mathbf{I}_m\right]$ (using $\mathbf{R}_0=\mathbf{I}_m$),
$\mathbf{u}_t=\mathrm{vec}(\mathbf{U}_t)$ is the $mn\times 1$ vector of structural shocks with
$\bm{\Omega} = \mathrm{Var}(\mathbf{u}_t)=\mathbf{I}_{mn}$ (see Equation~\eqref{eq:SigmaD}), and the $mn\times mn$ matrices $\boldsymbol{\Psi}_l$
are the reduced-form impulse responses computed recursively as in Section~\ref{sec:IRFs}.

The $h$-step-ahead forecast error \citep{lutkepohl2005}, conditional on information at time $t$, is
\begin{equation}
  \label{eq:fe}
    \boldsymbol{\xi}_{t+h|t} \equiv \mathbf{y}_{t+h} - \mathbb{E}_t\!\left[\mathbf{y}_{t+h}\right]
    = \sum_{l=0}^{h-1} \boldsymbol{\Theta}_l\, \mathbf{u}_{t+h-l}.
\end{equation}
Let $\iota_{ik}\equiv(k-1)m+i$ denote the index of variable $k$ of asset $i$ in $\mathbf{y}_t$, and let $\bm{\eta}_s$ denote the $s$-th column of
$\mathbf{I}_{mn}$, so that $\bm{\eta}_{\iota_{ik}}^\top\mathbf{y}_t$ selects the element corresponding to variable $k$ of asset $i$. Since the
structural shocks are mutually uncorrelated with unit variance, the $h$-step forecast error variance of variable $k$ for asset $i$ is
\begin{align}
  \mathcal{V}_{ik}(h)
  &\equiv \mathrm{Var}\!\left(\bm{\eta}_{\iota_{ik}}^\top\,\boldsymbol{\xi}_{t+h|t}\right)
    = \bm{\eta}_{\iota_{ik}}^\top
    \!\left(\sum_{l=0}^{h-1}\boldsymbol{\Theta}_l\,\mathbf{I}_{mn}\,\boldsymbol{\Theta}_l^\top\right)
    \bm{\eta}_{\iota_{ik}} \notag\\
  &= \sum_{l=0}^{h-1}\bm{\eta}_{\iota_{ik}}^\top\boldsymbol{\Theta}_l\boldsymbol{\Theta}_l^\top\bm{\eta}_{\iota_{ik}}. \label{eq:MSE}
\end{align}
Since
$\bm{\eta}_{\iota_{ik}}^\top\boldsymbol{\Theta}_l\boldsymbol{\Theta}_l^\top\bm{\eta}_{\iota_{ik}} =
\left\|\bm{\eta}_{\iota_{ik}}^\top\boldsymbol{\Theta}_l\right\|^2 =
\sum_{s=1}^{mn}\!\left[\bm{\eta}_{\iota_{ik}}^\top\boldsymbol{\Theta}_l\,\bm{\eta}_s\right]^2$, Equation~\eqref{eq:MSE} can be written as
\begin{equation}
  \label{eq:MSEexpanded}
  \mathcal{V}_{ik}(h) = \sum_{l=0}^{h-1}\sum_{s=1}^{mn}
  \left[\bm{\eta}_{\iota_{ik}}^\top\boldsymbol{\Theta}_l\,\bm{\eta}_s\right]^2,
\end{equation}
where each term $\left[\bm{\eta}_{\iota_{ik}}^\top\boldsymbol{\Theta}_l\,\bm{\eta}_s\right]^2$ represents the squared structural impulse response
at horizon $l$. Because the $mn$ structural shocks are orthogonal (\textit{i.e.}, $\bm{\Omega}=\mathbf{I}_{mn}$), the total variance decomposes
additively \citep{lutkepohl2005, kilian2017}. Denoting the contribution of the $s$-th structural shock as
\begin{equation}
  \label{eq:Vsingle}
    \mathcal{V}_{ik}^{(s)}(h) \equiv \sum_{l=0}^{h-1}\left[\bm{\eta}_{\iota_{ik}}^\top\boldsymbol{\Theta}_l\,\bm{\eta}_s\right]^2,
\end{equation}
it follows immediately from~\eqref{eq:MSEexpanded} that
\[
    \sum_{s=1}^{mn}\mathcal{V}_{ik}^{(s)}(h) = \mathcal{V}_{ik}(h).
\]
In $\mathbf{u}_t$, the structural shocks originating from asset $i$ occupy the $n$ positions $\left\{(j-1)m+i\colon j=1,\ldots,n\right\}$,
corresponding to the three financial variables ($\tv{},\bp{},\ret{}$) of that asset. Summing over these positions gives the within-asset
component
\begin{align}
    \mathcal{V}_{ik}^{\mathrm{within}}(h)
    &= \sum_{j=1}^{n}\mathcal{V}_{ik}^{((j-1)m+i)}(h)
     = \sum_{j=1}^{n}\sum_{l=0}^{h-1}
       \left[\bm{\eta}_{\iota_{ik}}^\top\boldsymbol{\Theta}_l\,\bm{\eta}_{(j-1)m+i}\right]^2,
    \label{eq:within}
\end{align}
which we denote $\mathcal{V}_{ik}^{j}(h)\equiv\mathcal{V}_{ik}^{((j-1)m+i)}(h)$ for $j\in\{1,\ldots,n\}$. The cross-asset component
collects all remaining shock contributions and is obtained by difference
\begin{equation}
    \label{eq:cross}
    \mathcal{V}_{ik}^{\mathrm{cross}}(h) = \mathcal{V}_{ik}(h) - \mathcal{V}_{ik}^{\mathrm{within}}(h)
    = \sum_{\substack{s=1\\s\,\notin\,\{(j-1)m+i\}_{j=1}^{n}}}^{mn}
      \mathcal{V}_{ik}^{(s)}(h).
\end{equation}
Normalising by the total variance and averaging across the $m$ assets yields
\begin{equation}\label{eq:FEVD}
    \overline{\mathrm{FEVD}}_{jk}(h) =
    \frac{\displaystyle\frac{1}{m}\sum_{i=1}^{m}\mathcal{V}_{ik}^{j}(h)}
         {\displaystyle\frac{1}{m}\sum_{i=1}^{m}\mathcal{V}_{ik}(h)},
    \qquad j \in \{\tv{},\,\bp{},\,\ret{}\},
\end{equation}
and analogously for the cross-asset share
\begin{equation}
  \label{eq:FEVDcross}
  \overline{\mathrm{FEVD}}_{\mathrm{cross},k}(h) =
  \frac{\displaystyle\frac{1}{m}\sum_{i=1}^{m}\mathcal{V}_{ik}^{\mathrm{cross}}(h)}
  {\displaystyle\frac{1}{m}\sum_{i=1}^{m}\mathcal{V}_{ik}(h)}.
\end{equation}
By
construction,$\overline{\mathrm{FEVD}}_{\tv{\,},k}(h) + \overline{\mathrm{FEVD}}_{\bp{\,},k}(h) + \overline{\mathrm{FEVD}}_{\ret{\,},k}(h) +
\overline{\mathrm{FEVD}}_{\mathrm{cross},k}(h) = 1$ at every horizon $h$. At $h=1$, since $\mathbf{R}_0 = \mathbf{I}_m$ rules out any
contemporaneous cross-asset transmission, $\overline{\mathrm{FEVD}}_{\mathrm{cross},k}(1) = 0$ exactly and the three within-asset components
account for the full variance. As the horizon grows, cross-asset contributions accumulate through the lagged dynamics encoded in
$\hat{\mathbf{A}}_1$.

We label $\overline{\mathrm{FEVD}}_{\bp{},k}$ as the \textit{informative component} of trading volume, since it captures the fraction of
trading activity driven by the latent information flow that simultaneously moves prices and volumes under the MDH
\citep{Tauchen1983}. Conversely, $\overline{\mathrm{FEVD}}_{\tv{},k}$ constitutes the \textit{liquidity component}, reflecting trading
activity unrelated to contemporaneous changes in price dispersion.

Table~\ref{tab:FEVD} reports the four components for trading volume ($k = \tv{}$) at horizons $h \in \{1, 5, 10, 20\}$. At the impact
horizon ($h=1$), approximately 20.4\% of volume variance is attributable to the RBP shock and 79.4\% to the volume's own shock, confirming that
same-day trading activity is dominated by asset-specific liquidity dynamics with a non-trivial informational contribution. The return shock
accounts for a negligible 0.2\%, reflecting the near-zero contemporaneous effect of returns on volume visible in Figure~\ref{fig:IRFs}.

\begin{table}[ht]
\centering
\caption{Average FEVD of trading volume (\%)}
\label{tab:FEVD}
\begin{tabular}{lrrrr}
\toprule
Shock & $h=1$ & $h=5$ & $h=10$ & $h=20$ \\
\midrule
$\bp{t}$ (informative)   & 20.4 & 13.1 &  9.9 &  7.2 \\
$\tv{t}$ (liquidity)     & 79.4 & 61.8 & 47.9 & 35.8 \\
$\ret{t}$                &  0.2 &  0.1 &  0.1 &  0.1 \\
Cross-Asset Spillovers   &  0.0 & 25.0 & 42.1 & 56.9 \\
\bottomrule
\end{tabular}
\end{table}

As the horizon expands, both within-asset components decline in favour of cross-asset spillovers. At $h=20$, cross-asset dynamics explain 56.9\%
of total volume variance, while the informative and liquidity components decline to 7.2\% and 35.8\%, respectively. This decrease reflects the
high integration of the Dow Jones constituents: after the initial within-asset response, volatility and liquidity shocks in one stock propagate
to others through the estimated $\hat{\mathbf{A}}_1$ matrix, generating system-wide co-movement in trading activity that dwarfs the asset-level
components at medium and long horizons. The stacked area chart in Figure~\ref{fig:FEVD_plot} visualizes this transition continuously over the
20-day horizon.

\begin{figure}[h!]
  \centering
  \caption{Stacked FEVD of Trading Volume}
  \label{fig:FEVD_plot}
  \par\smallskip
  \includegraphics[scale=0.7]{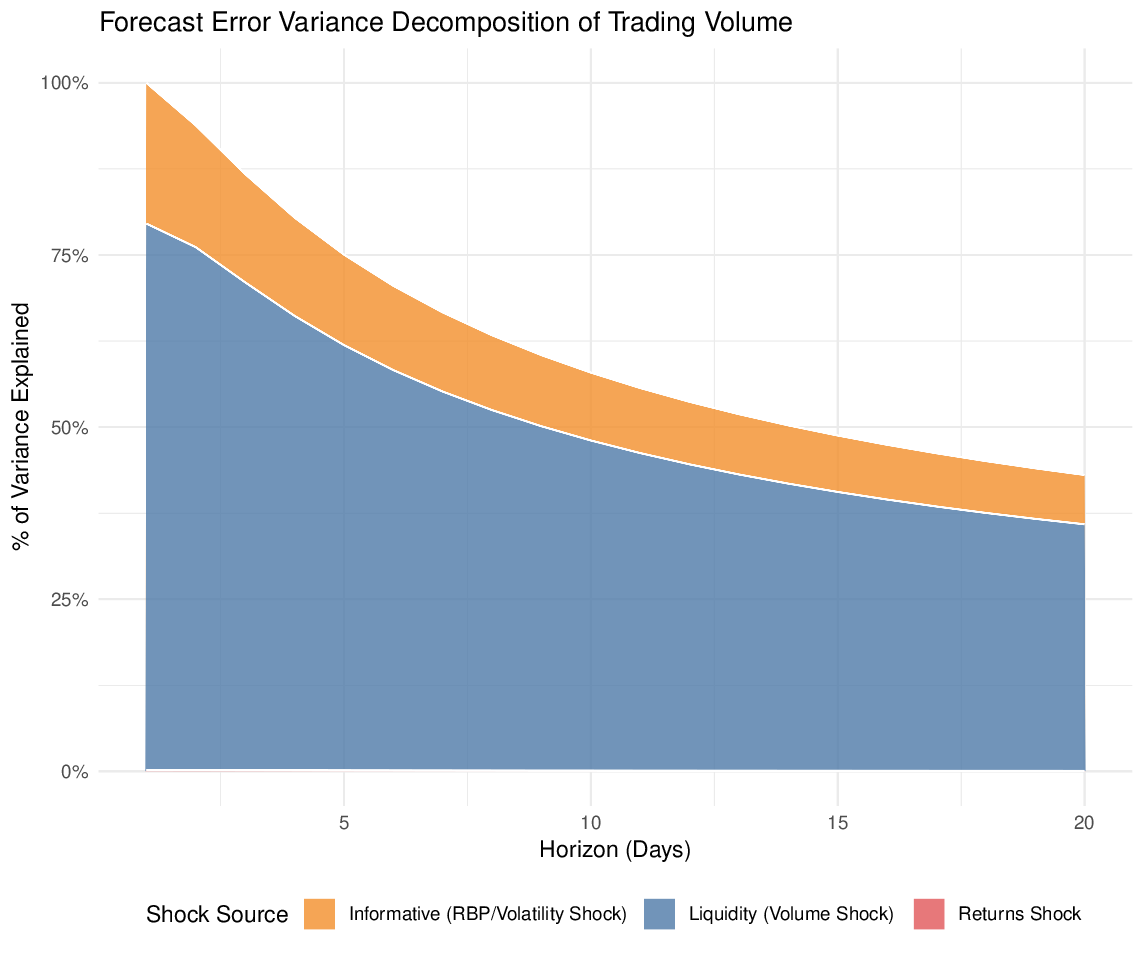}
    \note{The blue area represents the liquidity component (own-volume shock), the orange area represents the informative
    component (RBP shock), while the red area identifies the fraction of variance explained by returns. The white space above the colored areas
    indicates the variance explained by cross-asset spillovers.}
\end{figure}

\subsection{Informative component around FOMC}

To further validate our structural identification, we analyze the time series dynamics of the informative component around exogenous information
shocks. We define the daily informative share at time $t$ as
\begin{equation}\label{eq:daily_share}
  \widehat{s}_t = \frac{\displaystyle\sum_{i=1}^{m}
    \left(\hat{c}_{21}\, \hat{u}_{i,t}^{\textsc{rbp}}\right)^2}
  {\displaystyle\sum_{i=1}^{m}\left(\tilde{\hat{e}}_{i,t}^{\textsc{vol}}\right)^2},
\end{equation}
where $\tilde{\hat{e}}_{i,t}^{\textsc{vol}}$ is the standardized reduced-form residual for the volume of asset $i$ at day $t$,
$\hat{u}_{i,t}^{\textsc{rbp}} = \sum_{j} [\hat{\mathbf{Q}}^\top]_{j,\textsc{rbp}}\, \tilde{\hat{e}}_{i,j,t}$ is the recovered structural RBP
shock obtained by inverting $\tilde{\mathbf{E}}_t = \mathbf{U}_t\mathbf{Q}^{-\top}$, and $\hat{c}_{21}$ is the estimated element of
$\mathbf{Q}^{-\top}$ governing the contemporaneous impact of volatility on volume. This sample analog of the FEVD informative component varies
daily with the realized residuals and provides a time-varying measure of the information content of trading activity.

We then conduct an event study centered on Federal Open Market Committee (FOMC) announcements from 2021 to 2025 \citep{Lucca2015}. As shown in
Figure \ref{fig:EventStudy}, the informative share of trading volume remains stable and close to the unconditional mean (dotted line) during the
pre-announcement window, followed by a sharp and statistically significant spike exactly on the day of the event ($t=0$). This immediate reaction
confirms that our SMAR-based decomposition effectively isolates volume surges triggered by systematic information arrival. Furthermore, the rapid
reversion toward the mean in the subsequent days ($t+1, t+2$) suggests that the Dow Jones constituents incorporate macroeconomic news
efficiently, validating the RBP-volume channel as a reliable proxy for the latent information flow.

\begin{figure}[htbp]
  \centering
  \caption{Informative share of volume around FOMC announcements (Validated)}
  \label{fig:EventStudy}
  \includegraphics[scale=0.45]{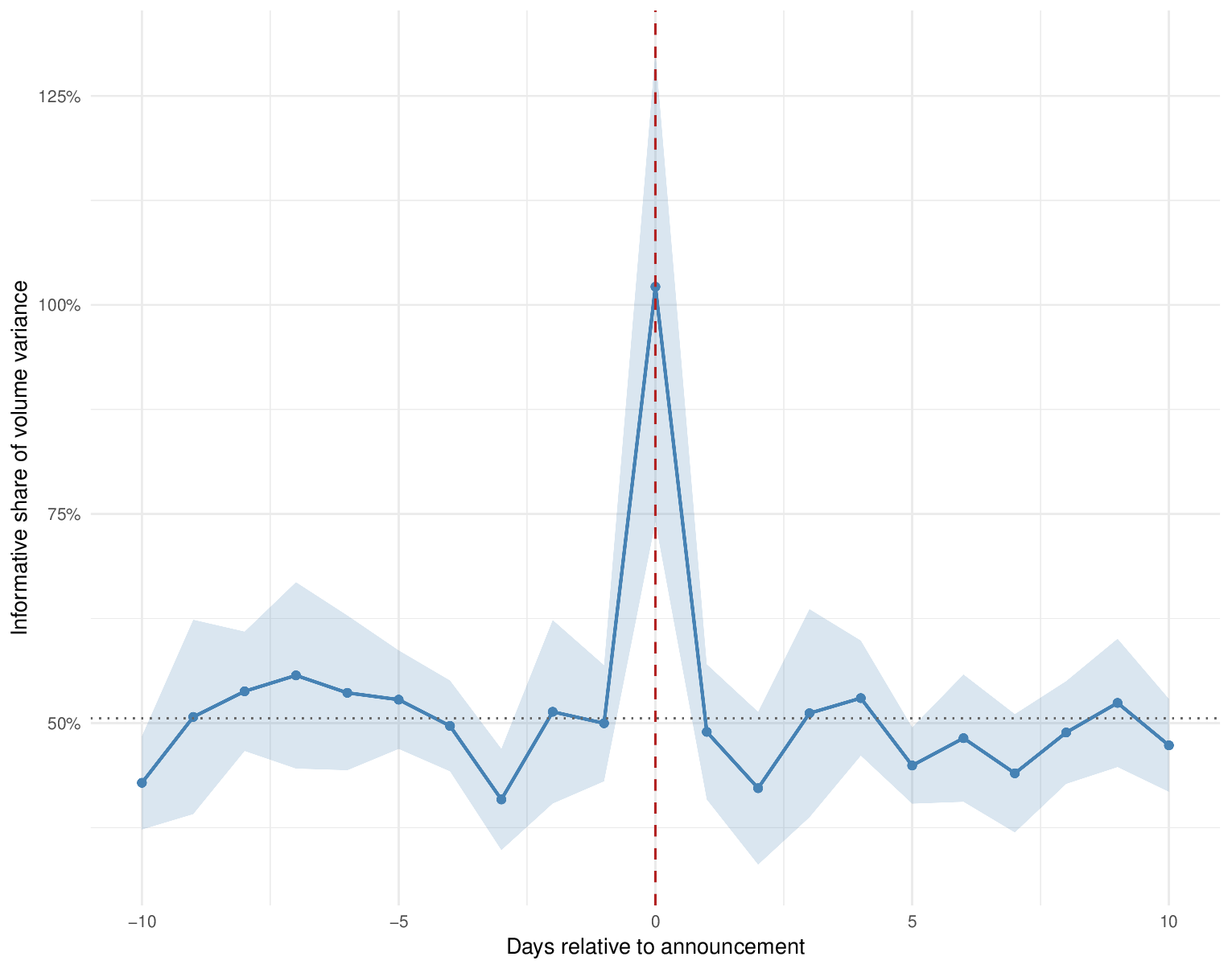} 
  \note{The plot shows the average daily informative share of volume variance across 33 FOMC events. The spike at $t=0$ denotes the
    contemporaneous impact of the macro announcement. The dotted line represents the sample mean, and the shaded area indicates the 90\%
    confidence interval.}
\end{figure}

\section{Conclusions}\label{sec:Conclusions}

This paper introduces the Structural Matrix Autoregressive (SMAR) model, a flexible framework that jointly accounts for contemporaneous and
dynamic interdependencies in high-dimensional financial time series. By extending the MAR model of \cite{Chen2021}, the SMAR explicitly
incorporates structural relationships between assets and financial variables via two contemporaneous interaction matrices, $\mathbf{C}_0$ and
$\mathbf{R}_0$, thereby providing a tractable identification strategy for orthogonal structural shocks and a natural decomposition of forecast
error variance even in large cross-sections.

The identification of the structural matrix $\mathbf{C}_0$ is based on a theory-driven recursive ordering, \textit{i.e.}, returns, realized
bipower variance, and trading volume, consistent with the efficient market hypothesis and the mixture-of-distributions hypothesis. The assumption
$\mathbf{R}_0 = \mathbf{I}_m$, while tractable, abstracts from contemporaneous cross-asset spillovers and represents a natural departure point
for future extensions.

Applied to daily data for the constituents of the Dow Jones Industrial Average from January 2021 to February 2025, the SMAR delivers four main
findings. First, the estimated contemporaneous structure confirms that unexpected changes in realized bipower variance are the primary driver of
contemporaneous trading activity: a one-standard-deviation volatility shock results in an immediate and economically large increase in volumes,
consistent with the MDH. Second, the structural impulse responses reveal a pronounced asymmetry. In fact, while volatility shocks generate a
large and rapid volume response, the effect of volume shocks on volatility is comparatively modest at short horizons but markedly more persistent
at longer horizons, suggesting that liquidity-driven trading gradually feeds back into price dispersion rather than producing an immediate
reassessment of risk. Third, the forecast error variance decomposition shows that at the one-day horizon trading volume is mainly driven by
own-liquidity dynamics (79.4\%) with a non-trivial informational component (20.4\%); as the horizon extends to twenty days, cross-asset
spillovers account for over half of total volume variance (56.9\%), highlighting the strong integration of Dow Jones constituents. Fourth, an
event study around Federal Open Market Committee announcements confirms that the $\bp{}$-based informative component spikes sharply and significantly
on announcement days and reverts rapidly thereafter.

Despite the promising results of the SMAR, several challenges remain. For instance, the assumption of no contemporaneous cross-asset effects is
an oversimplification that could be relaxed. Future work could explore more complex identification strategies for $\mathbf{R}_0$, \textit{e.g.},
using network theory or factor models. Additionally, the framework could be extended to incorporate time-varying parameters, allowing the model
to capture evolving market structures and changing responses to different economic regimes.

%%%%%%%%%%%%%%%%%%%%%%%%%%%%%%%%%%%%%%%%%%%%%%%%%%%%%%%%%%%%%%%%%%%%%%%%%%%%%%%%%%%%%%%%%%%%%%%%%%%%%%%%% 

\bibliography{references}
\clearpage
\appendix

\section{Derivation of the score function}
\label{sec:appendixA}

We derive the per-observation score $\mathbf{s}_t(\bm{\beta})$ from the log-likelihood contribution
\[
  \ell_t = -\frac{m}{2}\log|\Sigmac|-\frac{n}{2}\log|\Sigmar|-\frac{1}{2}\operatorname{tr}\!\left(\Sigmar^{-1}\mathbf{E}_t\Sigmac^{-1}\mathbf{E}_t^\top\right),
\]
where $\mathbf{E}_t = \mathbf{Y}_t - \mathbf{A}_1\mathbf{Y}_{t-1}\mathbf{B}_1^\top$ is $m\times n$, $\Sigmar$ is $m\times m$, and $\Sigmac$ is
$n\times n$.  We rely on two standard tools throughout. First, the trace inner product: $d f = \operatorname{tr}(\mathbf{Z}\,d\mathbf{X})$
identifies $\partial f/\partial \mathbf{X} = \mathbf{Z}^\top$, and equivalently
$df = [\operatorname{vec}(\mathbf{Z}^\top)]^\top \operatorname{vec}(d\mathbf{X})$. Second, the duplication matrix $\mathbf{D}_k$ of order $k$,
defined by $\operatorname{vec}(\bm{\Upsilon}) = \mathbf{D}_k\,\operatorname{vech}(\bm{\Upsilon})$ for any symmetric $k\times k$ matrix
$\bm{\Upsilon}$, so that restricting the differential to the symmetric subspace gives
$\operatorname{vec}(d\bm{\Upsilon}) = \mathbf{D}_k\,d\operatorname{vech}(\bm{\Upsilon})$, and therefore
$\partial f/\partial\operatorname{vech}(\bm{\Upsilon}) = \mathbf{D}_k^\top\operatorname{vec}(\partial f/\partial \bm{\Upsilon})$.

\subsection{Score with respect to $\text{vec}(\mathbf{A}_1$)}

Taking the differential of the quadratic term with $d\mathbf{E}_t = -(d\mathbf{A}_1)\mathbf{Y}_{t-1}\mathbf{B}_1^\top$ gives the following
\[
    d\ell_t
    = -\frac{1}{2}\operatorname{tr}\!\left(\Sigmar^{-1}d\mathbf{E}_t\,\Sigmac^{-1}\mathbf{E}_t^\top\right)
      -\frac{1}{2}\operatorname{tr}\!\left(\Sigmar^{-1}\mathbf{E}_t\,\Sigmac^{-1}d\mathbf{E}_t^\top\right).
\]
Because the two terms are transposes of each other inside the trace, they are equal, so
\[
    d\ell_t
    = \operatorname{tr}\!\left(\Sigmar^{-1}(d\mathbf{A}_1)\mathbf{Y}_{t-1}\mathbf{B}_1^\top\Sigmac^{-1}\mathbf{E}_t^\top\right)
    = \operatorname{tr}\!\left(\mathbf{Y}_{t-1}\mathbf{B}_1^\top\Sigmac^{-1}\mathbf{E}_t^\top\Sigmar^{-1}\,d\mathbf{A}_1\right),
\]
where the second step uses the cyclic property of the trace. Identifying
$\bm{\Xi}_{\mathbf{A}_1}= \mathbf{Y}_{t-1}\mathbf{B}_1^\top\Sigmac^{-1}\mathbf{E}_t^\top\Sigmar^{-1}$ yields
\[
    \frac{\partial\ell_t}{\partial\mathbf{A}_1} = \bm{\Xi}_{\mathbf{A}_1}^\top = \Sigmar^{-1}\mathbf{E}_t\Sigmac^{-1}\mathbf{B}_1\mathbf{Y}_{t-1}^\top,
\]
and therefore
\[
    \frac{\partial\ell_t}{\partial\operatorname{vec}(\mathbf{A}_1)} = \operatorname{vec}\!\left(\Sigmar^{-1}\mathbf{E}_t\Sigmac^{-1}\mathbf{B}_1\mathbf{Y}_{t-1}^\top\right)
    = \left[(\mathbf{Y}_{t-1}\mathbf{B}_1^\top\Sigmac^{-1})\otimes\Sigmar^{-1}\right]\operatorname{vec}(\mathbf{E}_t),
\]
where the last equality uses $\operatorname{vec}(\mathbf{PQR}) = (\mathbf{R}^\top\otimes \mathbf{P})\operatorname{vec}(\mathbf{Q})$.

\subsection{Score with respect to $\text{vec}(\mathbf{B}_1$)}
Now $d\mathbf{E}_t = -\mathbf{A}_1\mathbf{Y}_{t-1}(d\mathbf{B}_1)^\top$, hence $d\mathbf{E}_t^\top = -(d\mathbf{B}_1)\mathbf{Y}_{t-1}^\top\mathbf{A}_1^\top$.  Arguing as above:
\[
    d\ell_t
    = \operatorname{tr}\!\left(\Sigmar^{-1}\mathbf{E}_t\Sigmac^{-1}(d\mathbf{B}_1)\mathbf{Y}_{t-1}^\top\mathbf{A}_1^\top\right)
    = \operatorname{tr}\!\left(\mathbf{Y}_{t-1}^\top\mathbf{A}_1^\top\Sigmar^{-1}\mathbf{E}_t\Sigmac^{-1}\,d\mathbf{B}_1\right).
\]
Identifying $\bm{\Xi}_{\mathbf{B}_1} = \mathbf{Y}_{t-1}^\top\mathbf{A}_1^\top\Sigmar^{-1}\mathbf{E}_t\Sigmac^{-1}$ yields
\[
    \frac{\partial\ell_t}{\partial\mathbf{B}_1} = \bm{\Xi}_{\mathbf{B}_1}^\top = \Sigmac^{-1}\mathbf{E}_t^\top\Sigmar^{-1}\mathbf{A}_1\mathbf{Y}_{t-1},
\]
and therefore
\[
  \frac{\partial\ell_t}{\partial\operatorname{vec}(\mathbf{B}_1)} =
  \operatorname{vec}\!\left(\Sigmac^{-1}\mathbf{E}_t^\top\Sigmar^{-1}\mathbf{A}_1\mathbf{Y}_{t-1}\right) =
  \left[(\mathbf{Y}_{t-1}^\top\mathbf{A}_1^\top\Sigmar^{-1})\otimes\Sigmac^{-1}\right]\operatorname{vec}(\mathbf{E}_t^\top).
\]

\subsection{Score with respect to $\text{vech}(\mathbf{\Sigma}_c$)}

The matrices $\mathbf{A}_1$, $\mathbf{B}_1$, and $\Sigmar$ are held fixed.  Using $d(\Sigmac^{-1}) = -\Sigmac^{-1}(d\Sigmac)\Sigmac^{-1}$:
\begin{align*}
  d\ell_t
  &= -\frac{m}{2}\operatorname{tr}\!\left(\Sigmac^{-1}\,d\Sigmac\right)
    +\frac{1}{2}\operatorname{tr}\!\left(\Sigmar^{-1}\mathbf{E}_t\Sigmac^{-1}(d\Sigmac)\Sigmac^{-1}\mathbf{E}_t^\top\right)\\
  &= \operatorname{tr}\!\left(\underbrace{\left[-\frac{m}{2}\Sigmac^{-1}+\frac{1}{2}\Sigmac^{-1}\mathbf{E}_t^\top\Sigmar^{-1}\mathbf{E}_t\Sigmac^{-1}\right]}_{\displaystyle\mathbf{G}_c}\,d\Sigmac\right),
\end{align*}
where we used the cyclic property to bring $\Sigmar^{-1}\mathbf{E}_t$ to the right.  The matrix $\mathbf{G}_c$ is symmetric, so restricting to
the symmetric subspace via the duplication matrix gives
\[
  \frac{\partial\ell_t}{\partial\operatorname{vech}(\Sigmac)} = \mathbf{D}_n^\top\operatorname{vec}(\mathbf{G}_c) =
  -\frac{m}{2}\mathbf{D}_n^\top\operatorname{vec}(\Sigmac^{-1})
  +\frac{1}{2}\mathbf{D}_n^\top(\Sigmac^{-1}\otimes\Sigmac^{-1})\operatorname{vec}(\mathbf{E}_t^\top\Sigmar^{-1}\mathbf{E}_t),
\]
where the last equality uses $\operatorname{vec}(\Sigmac^{-1} \mathbf{Q} \Sigmac^{-1}) = (\Sigmac^{-1}\otimes\Sigmac^{-1})\operatorname{vec}(\mathbf{X})$ with $\mathbf{X} = \mathbf{E}_t^\top\Sigmar^{-1}\mathbf{E}_t$.

\subsection{Score with respect to $\text{vech}(\mathbf{\Sigma}_r$)}

An entirely symmetric argument, with the roles of $\Sigmar$ and $\Sigmac$ exchanged and $n$ replaced by $m$, gives
\[
    \mathbf{G}_r = -\frac{n}{2}\Sigmar^{-1}+\frac{1}{2}\Sigmar^{-1}\mathbf{E}_t\Sigmac^{-1}\mathbf{E}_t^\top\Sigmar^{-1},
\]
and therefore
\[
    \frac{\partial\ell_t}{\partial\operatorname{vech}(\Sigmar)}
    = \mathbf{D}_m^\top\operatorname{vec}(\mathbf{G}_r)
    = -\frac{n}{2}\mathbf{D}_m^\top\operatorname{vec}(\Sigmar^{-1})
      +\frac{1}{2}\mathbf{D}_m^\top(\Sigmar^{-1}\otimes\Sigmar^{-1})\operatorname{vec}(\mathbf{E}_t\Sigmac^{-1}\mathbf{E}_t^\top).
\]
Stacking the four blocks and summing over $t = 2,\ldots,T$ recovers equation~\eqref{eq:score} in the main text.

\section{Proof of Proposition \ref{prop:jacobian}}
\label{sec:proof}

\begin{proof}
  The proof proceeds by applying the chain rule through four steps:
  
  $\hat{\bm{\Sigma}}_{\text{c}} \to \hat{\bm{\Sigma}}_{\text{c}}^{-1} \to \hat{\mathbf{C}}_0^\top \to \hat{\mathbf{Q}}^{-\top} \to
  \operatorname{vecl}(\hat{\mathbf{Q}}^{-\top})$.

  The first step concerns the Jacobian of matrix inversion,
  $\frac{\partial\,\operatorname{vech}(\hat{\bm{\Sigma}}_{\text{c}}^{-1})}{\partial\,\operatorname{vech}(\hat{\bm{\Sigma}}_{\text{c}})^\top}$. From
  the differential
  $d(\hat{\bm{\Sigma}}_{\text{c}}^{-1}) = -\hat{\bm{\Sigma}}_{\text{c}}^{-1}(d\hat{\bm{\Sigma}}_{\text{c}})\hat{\bm{\Sigma}}_{\text{c}}^{-1}$,
  vectorizing and reducing to $\operatorname{vech}$ via the duplication matrix $\mathbf{D}_n$ and its Moore--Penrose inverse $\mathbf{D}_n^+$:

  \[
    \mathbf{J}_1 =
    \frac{\partial\,\operatorname{vech}(\hat{\bm{\Sigma}}_{\text{c}}^{-1})}{\partial\,\operatorname{vech}(\hat{\bm{\Sigma}}_{\text{c}})^\top} =
    -\mathbf{D}_n^+\,(\hat{\bm{\Sigma}}_{\text{c}}^{-1}\otimes\hat{\bm{\Sigma}}_{\text{c}}^{-1})\,\mathbf{D}_n.
  \] 
  Then, we compute the Jacobian of the Cholesky factor,
  $\frac{\partial\,\operatorname{vec}(\hat{\mathbf{C}}_0^\top)}{\partial\,\operatorname{vech}(\hat{\bm{\Sigma}}_{\text{c}}^{-1})^\top}$. From the
  Cholesky decomposition of the precision matrix $\hat{\bm{\Sigma}}_{\text{c}}^{-1} = \hat{\mathbf{C}}_0^\top\hat{\mathbf{C}}_0$, an infinitesimal perturbation
  gives
  \[
    d\hat{\bm{\Sigma}}_{\text{c}}^{-1} = d\hat{\mathbf{C}}_0^\top\,\hat{\mathbf{C}}_0 + \hat{\mathbf{C}}_0^\top\,d\hat{\mathbf{C}}_0.
  \]
  Vectorizing both sides and using $\operatorname{vec}(\mathbf{ABC}) = (\mathbf{C}^\top\otimes \mathbf{A})\operatorname{vec}(\mathbf{B})$ and
  $\operatorname{vec}(\mathbf{A}^\top)=\mathbf{K}_{nn}\operatorname{vec}(\mathbf{A})$ yields
  \[
    \operatorname{vec}(d\,\hat{\bm{\Sigma}}_{\text{c}}^{-1}) = \left[(\hat{\mathbf{C}}_0^\top\otimes\mathbf{I}_n) +
      (\mathbf{I}_n\otimes\hat{\mathbf{C}}_0^\top)\mathbf{K}_{nn}\right]\operatorname{vec}(d\hat{\mathbf{C}}_0^\top).
  \]
  
  Since $\hat{\mathbf{C}}_0^\top$ is lower triangular, we use the elimination matrix to write
  $\operatorname{vec}(d\hat{\mathbf{C}}_0^\top) = \mathbf{T}_n\,\operatorname{vech}(d\hat{\mathbf{C}}_0^\top)$. Applying $\mathbf{D}_n^+$ to reduce
  the symmetric $\operatorname{vec}(d\,\hat{\bm{\Sigma}}_{\text{c}}^{-1})$ to $\operatorname{vech}(d\,\hat{\bm{\Sigma}}_{\text{c}}^{-1})$, we obtain
  \[
    \operatorname{vech}(d\,\hat{\bm{\Sigma}}_{\text{c}}^{-1}) = \hat{\mathbf{M}}\,\operatorname{vech}(d\hat{\mathbf{C}}_0^\top),
  \]
  where
  $\hat{\mathbf{M}} = \mathbf{D}_n^+\left[(\hat{\mathbf{C}}_0^\top\otimes\mathbf{I}_n) +
    (\mathbf{I}_n\otimes\hat{\mathbf{C}}_0^\top)\mathbf{K}_{nn}\right]\mathbf{T}_n$ is a square invertible matrix of dimension
  $\frac{n(n+1)}{2}\times\frac{n(n+1)}{2}$. Solving for $\operatorname{vec}(d\hat{\mathbf{C}}_0^\top)$ and noting that
  $\operatorname{vech}(d\,\hat{\bm{\Sigma}}_{\text{c}}^{-1}) = d\operatorname{vech}(\hat{\bm{\Sigma}}_{\text{c}}^{-1})$, we get
  \[
    \operatorname{vec}(d\hat{\mathbf{C}}_0^\top) = \mathbf{T}_n\,\hat{\mathbf{M}}^{-1}\,d\operatorname{vech}(\hat{\bm{\Sigma}}_{\text{c}}^{-1}), 
  \]
  which implies
  \[
    \mathbf{J}_2 =
    \frac{\partial\,\operatorname{vec}(\hat{\mathbf{C}}_0^\top)}{\partial\,\operatorname{vech}(\hat{\bm{\Sigma}}_{\text{c}}^{-1})^\top} =
    \mathbf{T}_n\,\hat{\mathbf{M}}^{-1}.
  \] 
  
  The third step foresees the computation of the Jacobian of normalized inversion,
  $\frac{\partial\,\operatorname{vec}(\hat{\mathbf{Q}}^{-\top})}{\partial\,\operatorname{vec}(\hat{\mathbf{C}}_0^\top)^\top}$. From Equation
  \eqref{eq:Q}, the unit lower triangular structural impact matrix is given by
  $\hat{\mathbf{Q}}^{-\top} = \hat{\mathbf{C}}_0^{-\top}\langle\hat{\mathbf{C}}_0\rangle$. Differentiating this mapping yields:
  \[
    d\hat{\mathbf{Q}}^{-\top} = (d\hat{\mathbf{C}}_0^{-\top})\langle\hat{\mathbf{C}}_0\rangle +
    \hat{\mathbf{C}}_0^{-\top}(d\langle\hat{\mathbf{C}}_0\rangle). 
  \]
  Substituting $d\hat{\mathbf{C}}_0^{-\top} = -\hat{\mathbf{C}}_0^{-\top}(d\hat{\mathbf{C}}_0^{\top})\hat{\mathbf{C}}_0^{-\top}$, we get
  \[
    d\hat{\mathbf{Q}}^{-\top} = -\hat{\mathbf{C}}_0^{-\top}(d\hat{\mathbf{C}}_0^{\top})\hat{\mathbf{C}}_0^{-\top}\langle\hat{\mathbf{C}}_0\rangle +
    \hat{\mathbf{C}}_0^{-\top}(d\langle\hat{\mathbf{C}}_0\rangle). 
  \]
  Since $\hat{\mathbf{Q}}^{-\top} = \hat{\mathbf{C}}_0^{-\top}\langle\hat{\mathbf{C}}_0\rangle$, we have
  \[
    d\hat{\mathbf{Q}}^{-\top} = -\hat{\mathbf{C}}_0^{-\top}(d\hat{\mathbf{C}}_0^{\top})\hat{\mathbf{Q}}^{-\top} +
    \hat{\mathbf{C}}_0^{-\top}(d\langle\hat{\mathbf{C}}_0\rangle).
  \]
Vectorizing and using $\operatorname{vec}(\mathbf{ABC}) = (\mathbf{C}^{\top} \otimes \mathbf{A})\operatorname{vec}(\mathbf{B})$ yields
\[
  \operatorname{vec}(d \hat{\mathbf{Q}}^{-\top}) = -\left(\hat{\mathbf{Q}}^{-1} \otimes \hat{\mathbf{C}}_0^{-\top}\right)
  \operatorname{vec}(d\hat{\mathbf{C}}_0^\top) + \left(\mathbf{I}_n \otimes \hat{\mathbf{C}}_0^{-\top}\right)\operatorname{vec}(d
  \langle\hat{\mathbf{C}}_0\rangle).
\]
Since $\langle\hat{\mathbf{C}}_0\rangle$ is a diagonal matrix, we can select its diagonal elements from $\hat{\mathbf{C}}_0^\top$ using selection
matrices $\mathbf{H}_k=\bm{\eta}_k\bm{\eta}_k^\top$, where $\bm{\eta}_k$ is the $k$-th canonical basis vector of $\mathbb{R}^n$. Specifically,
$\operatorname{vec}(d
\langle\hat{\mathbf{C}}_0\rangle)=\sum_{k=1}^n(\mathbf{H}_k\otimes\mathbf{H}_k)\operatorname{vec}(d\hat{\mathbf{C}}_0^\top)$. Substituting this
into the second term yields
\[
\begin{split}
  &\left(\mathbf{I}_n \otimes \hat{\mathbf{C}}_0^{-\top}\right)\sum_{k=1}^n\left(\mathbf{H}_k \otimes \mathbf{H}_k\right) \operatorname{vec}(d \hat{\mathbf{C}}_0^{\top}) =\\
  &\sum_{k=1}^n\left(\mathbf{H}_k \otimes \hat{\mathbf{C}}_0^{-\top}\mathbf{H}_k\right) \operatorname{vec}(d \hat{\mathbf{C}}_0^{\top}).
\end{split}
\]
Therefore, combining both terms and factoring out $\operatorname{vec}(d \hat{\mathbf{C}}_0^\top)$ we obtain:
\[
    \mathbf{J}_3 = \sum_{k=1}^n\left(\mathbf{H}_k \otimes \hat{\mathbf{C}}_0^{-\top}\mathbf{H}_k\right)  - \left(\hat{\mathbf{Q}}^{-1} \otimes \hat{\mathbf{C}}_0^{-\top}\right).
\]

Combining via the chain rule:
\[
\begin{split}
  \mathbf{J} &= \mathbf{L}_n^s \cdot \mathbf{J}_3 \cdot \mathbf{J}_2 \cdot \mathbf{J}_1\\
  &= \mathbf{L}_n^s \left[ \sum_{k=1}^n\bigl(\mathbf{H}_k \otimes \hat{\mathbf{C}}_0^{-\top}\mathbf{H}_k\bigr) - \bigl(\hat{\mathbf{Q}}^{-1} \otimes \hat{\mathbf{C}}_0^{-\top}\bigr) \right]
  \left(\mathbf{T}_n\,\hat{\mathbf{M}}^{-1}\right)\left(-\mathbf{D}_n^+\,(\hat{\bm{\Sigma}}_{\text{c}}^{-1}\otimes\hat{\bm{\Sigma}}_{\text{c}}^{-1})\,\mathbf{D}_n\right)\\
  &= \mathbf{L}_n^s \left[ \bigl(\hat{\mathbf{Q}}^{-1} \otimes \hat{\mathbf{C}}_0^{-\top}\bigr) -\sum_{k=1}^n\bigl(\mathbf{H}_k \otimes \hat{\mathbf{C}}_0^{-\top}\mathbf{H}_k\bigr)\right]
  \mathbf{T}_n\,\hat{\mathbf{M}}^{-1}\mathbf{D}_n^+\,(\hat{\bm{\Sigma}}_{\text{c}}^{-1}\otimes\hat{\bm{\Sigma}}_{\text{c}}^{-1})\,\mathbf{D}_n,
\end{split}
\] 
which exactly matches the result stated in Proposition \ref{prop:jacobian}.
\end{proof}
\end{document}